\newtheorem{theorem}{Theorem}
\newcounter{MYtempeqncnt}
\let\bs\boldsymbol
\begin{document}
\title{On the Performance of Downlink NOMA in Underlay Spectrum Sharing}
\author{Vaibhav~Kumar,~\IEEEmembership{Member,~IEEE,} Zhiguo~Ding,~\IEEEmembership{Fellow,~IEEE,} and~Mark~F.~Flanagan,~\IEEEmembership{Senior Member,~IEEE}
\thanks{V. Kumar and M. F. Flanagan are with School of Electrical and Electronic Engineering, University College Dublin, Dublin D04 V1W8, Ireland (e-mail: vaibhav.kumar@ieee.org, mark.flanagan@ieee.org).\\
Z. Ding is with School of Electrical and Electronic Engineering, The University of Manchester, Manchester M13 9PL, U.K. (e-mail: zhiguo.ding@manchester.ac.uk).}
\thanks{
This work was supported by Science Foundation Ireland (SFI) through Grant 17/US/3445 and co-funded through the European Regional Development Fund under Grant 13/RC/2077.}}

\markboth{Accepted in IEEE Transactions on Vehicular Technology 2021}%
{Kumar \MakeLowercase{\textit{et al.}}: On the Performance of Downlink NOMA in Underlay Spectrum Sharing}

\maketitle
\begin{abstract}
Non-orthogonal multiple access (NOMA) and spectrum sharing are two potential technologies for providing massive connectivity in beyond fifth-generation (B5G) networks. In this paper, we present the performance analysis of a multi-antenna-assisted two-user downlink NOMA system in an underlay spectrum sharing system. We derive closed-form expressions for the average achievable sum-rate and outage probability of the secondary network under a peak interference constraint and/or peak power constraint, depending on the availability of channel state information (CSI) of the interference link between secondary transmitter (ST) and primary receiver (PR). For the case where the ST has a fixed power budget, we show that  performance can be divided into two specific regimes, where either the interference constraint or the power constraint primarily dictates the performance. Our results confirm that the NOMA-based underlay spectrum sharing system significantly outperforms its orthogonal multiple access (OMA) based counterpart, by achieving higher average sum-rate and lower outage probability. We also show the effect of information loss at the ST in terms of CSI of the link between the ST and PR on the system performance. Moreover, we also present closed-form expressions for the optimal power allocation coefficient that minimizes the outage probability of the NOMA system for the special case where the secondary users are each equipped with a single antenna. A close agreement between the simulation and analytical results confirms the correctness of the presented analysis. 
\end{abstract}

\IEEEpeerreviewmaketitle
\section{Introduction} \label{Sec_Intro}
\IEEEPARstart{T}{he} commercial deployment of the 5G wireless communications network has already begun in mid-2019 in many countries. The first phase of 5G mobile communications is expected to be operating in the 3.6 GHz range. However, the amount of spectrum in the sub-GHz and below 6~GHz range, which will support many crucial applications in 5G, is very congested~\cite{Nekovee}. With the advent of many new wireless communication applications and services, the number of devices/users accessing the wireless spectrum is increasing very rapidly, resulting in the problem of \emph{spectrum scarcity}. On the other hand, it is well-known that the 3.5 GHz band (along with some ISM and mmWave bands) are currently under-utilized, and therefore \emph{spectrum sharing} is considered as a potential solution to enhance the spectrum usage efficiency~\cite{Gridlock, IntelRadio, ML_CR}. On the other hand, NOMA has also gained tremendous attention as a potential multiple access technique for the next-generation mobile communications network, as it can provide \emph{massive connectivity} and can also enhance the spectral efficiency~\cite{NOMA_book, Myths, mmWave}.

In general, spectrum sharing between a licensed/primary network and an unlicensed/secondary network can be accomplished in three ways: \emph{underlay}, \emph{interweave} and \emph{overlay}~\cite{Gridlock}. In the case of underlay spectrum sharing, the ST transmits simultaneously with the primary-user transmitter (PT) using the band of frequencies originally owned by the primary network, in such a manner that the interference inflicted by the secondary network on the primary network is below a tolerance limit. In interweave spectrum sharing, a cognitive engine first determines the spectrum bands for which the usage license is owned by a primary network and the secondary network uses those licensed bands when primary activity is not detected in those bands. Determination of these \emph{spectrum holes} by the cognitive engine is termed as \emph{spectrum sensing}. In the case of overlay spectrum sharing, the secondary user transmits simultaneously with the primary user, but compensates for the interference caused on the primary network by relaying a part of the primary user's message to the intended receiver(s). The fusion of NOMA and spectrum sharing has gained particular attention in the past few years, as it has the potential to provide massive connectivity and to further enhance the spectrum utilization efficiency in beyond-5G systems. 

For the case of overlay spectrum sharing, many notable works analyzing the achievable rate, outage probability, throughput and optimal power allocation have been presented for different NOMA systems such as multi-user secondary network, energy harvesting STs, relay-based cooperative systems and hybrid satellite-terrestrial networks~\cite{Nakagami, Kader_ICNC, NovelSpectrumSharingDing, KaderOverlayElsevier, Satellite}.

On the other hand, there has also been particular research attention given to underlay spectrum sharing NOMA systems. Different cooperative and non-cooperative NOMA-based spectrum sharing architectures were proposed in~\cite{NOMA-SS-Mag}, including underlay, overlay and cognitive radio (CR) inspired NOMA. It was shown in~\cite{NOMA-SS-Mag} that NOMA-based spectrum sharing outperforms its OMA-based counterpart in terms of outage probability. The outage probability analysis of a NOMA-based underlay spectrum sharing system was presented in~\cite{LargeScale}, where the power transmitted by the ST was constrained by a peak tolerable interference power at PRs as well as a peak power budget at the ST. In~\cite{Detect_and_Forward}, the outage performance analysis of a relay-based underlay spectrum sharing NOMA system, consisting of one ST, one detect-and-forward relay, one PR and two secondary receivers (SRs), was presented, where the power transmitted from the ST was constrained by a peak interference constraint at the PR as well as a peak power budget at the ST. However, in~\cite{Detect_and_Forward}, it was assumed that the transmission from the relay does not cause any interference at the PR (due to a large separation between them), and the signal received at the relay and the two SRs were also assumed to be free from any interference from the primary network. The analysis of outage probability for a relay-based spectrum sharing NOMA system considering the relay-to-PR interference was presented in~\cite{ImperfectCSI}. The outage probability and throughput analysis of an underlay spectrum sharing hybrid OMA/NOMA system consisting of a PT, a PR, an ST and two SRs was presented in~\cite{Hybrid}, where the authors considered both primary-to-secondary and secondary-to-primary interference. The power transmitted from the ST was constrained by a peak interference constraint at the PR as well as a peak power budget constraint at the ST. However, it is noteworthy that the closed-form expressions for the system throughput (or the average achievable sum-rate) was not derived in~\cite{Hybrid}. The performance analysis in terms of average achievable sum-rate, outage probability and asymptotic behavior (of outage probability) of a NOMA-based cooperative relaying system in an underlay spectrum sharing scenario, considering only the peak interference constraint, was presented in~\cite{Vaibhav} (here the authors assumed that the ST and relay do not have any power budget constraints). In~\cite{Amplify_and_Forward}, the analysis of the outage probability for an underlay spectrum-sharing-inspired amplify-and-forward relay-based two-user downlink NOMA system was presented, where the power transmitted from the ST was assumed to be constrained by a peak power budget at the ST as well as a peak interference constraint at the PR. 

In summary, for the case of NOMA-based underlay spectrum sharing, most existing research deals only with the outage probability analysis (as in~\cite{LargeScale, Detect_and_Forward, ImperfectCSI, Hybrid, Amplify_and_Forward}) or considers only the peak interference constraint at the PR (as in~\cite{Vaibhav}). It is also noteworthy that only single-antenna receivers were considered in~\cite{LargeScale, Detect_and_Forward, ImperfectCSI, Hybrid, Amplify_and_Forward}. Also note that perfect instantaneous CSI regarding the ST-PR link(s) was assumed to be available at the ST in~\cite{LargeScale, Detect_and_Forward, Vaibhav} and~\cite{ Amplify_and_Forward}, whereas imperfect instantaneous CSI regarding the ST-PR link was assumed to be available at the ST in~\cite{ImperfectCSI} and~\cite{Hybrid}. Motivated by this, in this paper, we present the average achievable sum-rate and outage probability analysis of a two-user downlink NOMA system in underlay spectrum sharing (over Rayleigh fading wireless channels) where both of the (secondary) users are assumed to be equipped with multiple antennas. We also consider that only statistical channel state information (CSI) is available at the ST regarding the links between the ST and the users\footnote{For the case of a spectrum sharing system, the ST needs to coordinate with the primary network in order to acquire CSI for the ST-PR link(s). This results in extra overhead at the ST as compared to the traditional (licensed) systems. Since for the secondary channels (i.e., the channels between the ST and the NOMA users), acquiring statistical CSI requires a smaller overhead as compared to that for instantaneous CSI, we have considered the former in order to reduce the communication overhead at the ST. However, it can be shown that it is straightforward to extend the analysis presented in this paper to the case when instantaneous CSI is available at the ST regarding the secondary channels.}, whereas, for the case of the link between the ST and PR, we consider different scenarios, as explained in~Table~\ref{TableConditions}. Hereafter, we will refer to the CSI of the ST-PR link as the interference-link CSI (IL-CSI).

\renewcommand{\arraystretch}{1.3}
\begin{table*}
\centering
\caption{Details of the different system configurations analyzed in this work.}
\label{TableConditions}
\begin{tabular}{|l|l|l|l|}
\hline 
\multirow{2}{*}{\textbf{Case}} & \multicolumn{1}{l|}{\textbf{Power budget} } & \multirow{2}{*}{\textbf{IL-CSI at ST}} & \multirow{2}{*}{\textbf{Insights (for NOMA)}}\tabularnewline
 & \textbf{at ST} &  & \tabularnewline
\hline 
\multirow{3}{*}{IntICSI} & \multirow{3}{*}{Unlimited} & \multirow{3}{*}{Instantaneous} & The difference between the outage probability of IntICSI and IntSCSI
becomes negligible for a large\tabularnewline
 &  &  & number of antennas and a large value of the peak tolerable interference.
The rate of decay of the \tabularnewline
 &  &  & outage probability depends on the minimum number of antennas at the
NOMA users. However, a \tabularnewline
\cline{1-3} \cline{2-3} \cline{3-3} 
\multirow{3}{*}{IntSCSI} & \multirow{3}{*}{Unlimited} & \multirow{3}{*}{Statistical} & significant difference can be noted between the achievable sum-rate
of the two systems, even for a \tabularnewline
 &  &  & large number of antennas and/or for a large value of interference.
Using asymptotic rate analysis, we \tabularnewline
 &  &  & show that there is no benefit (in terms of achievable rate) from having
multiple antennas at the far user.\tabularnewline
\hline 
\multirow{2}{*}{PowIntICSI} & \multirow{2}{*}{Limited} & \multirow{2}{*}{Instantaneous} & The system experiences an outage floor, as well as a rate saturation,
in the power-constrained regime. \tabularnewline
 &  &  & A higher power budget is advantageous for these systems in the power-constrained
regime. However,\tabularnewline
\cline{1-3} \cline{2-3} \cline{3-3} 
\multirow{2}{*}{PowIntSCSI} & \multirow{2}{*}{Limited} & \multirow{2}{*}{Statistical} & no significant benefit is observed from a higher power budget in the
interference constrained regime\tabularnewline
 &  &  & for PowIntICSI and PowIntSCSI system, whereas for the case of the
PowIntOneBit system, a higher \tabularnewline
\cline{1-3} \cline{2-3} \cline{3-3} 
\multirow{2}{*}{PowIntOneBit} & \multirow{2}{*}{Limited} & \multirow{2}{*}{No CSI} & power budget results in an inferior performance in the interference
constrained regime. The benefit\tabularnewline
 &  &  & of using a large number of antennas is significant in both the regimes. \tabularnewline
\hline 
\end{tabular}
\end{table*}

The main contributions of this paper are summarized as follows:
\begin{itemize}
	\item We derive closed-form expressions for the average achievable sum-rate and outage probability for the spectrum sharing NOMA system for all the five cases shown in~Table~\ref{TableConditions}. 
	\item For the special case where the secondary users are each equipped with a single receive antenna, we derive an explicit analytical expression for the optimal power allocation coefficient that minimizes the outage probability of the spectrum sharing NOMA system (except for the case of PowIntICSI). For the general case, where the users are equipped with more than one receive antenna, the value of optimal power allocation coefficient is obtained numerically.
	\item By comparing the performance of the spectrum sharing NOMA system with the corresponding OMA system, we show that the NOMA system outperforms its OMA-based counterpart by achieving lower outage probability and higher achievable rate. More interestingly, we show a performance comparison among different system configurations of the spectrum sharing NOMA system (as described in Table~\ref{TableConditions}) to show the effect of loss of information (in terms of CSI) on the overall system performance.
	\item For the case of NOMA systems, our results confirm that when the value of the peak tolerable interference at the PR is large and the SRs are equipped with a large number of antennas, the difference between the performance of IntICSI and IntSCSI is negligible in terms of outage probability, whereas the performance difference is significant in terms of achievable sum-rate. A similar trend is observed in the \emph{interference-constrained regime} (to be explained later) when the performance (as regards outage probability as well as achievable sum-rate) of PowIntICSI, PowIntSCSI and PowIntOneBit are compared. Interestingly, the results also confirm that for the case of PowIntOneBit NOMA, a higher power budget at the ST results in an inferior performance (both in terms of outage probability and achievable sum-rate) in the interference-constrained regime.
\end{itemize}
The achievable rate analysis of an underlay spectrum sharing OMA system (with a single downlink secondary user) considering peak or average interference constraint and unlimited or limited power-budget constraint, was presented in~\cite{Alouini}. The authors demonstrated the effect of different levels of IL-CSI on the achievable rate of the secondary user. The main difference between the work presented in~\cite{Alouini} and that of this paper is that we consider a (downlink) multiple access system, whereas in~\cite{Alouini} only a single downlink user was considered. Also, in the work presented in~\cite{Alouini}, the authors considered a single-antenna-assisted downlink user, while here we consider multi-antenna-assisted downlink users. It is also important to note that along with the achievable rate analysis, we also present the outage probability analysis of the spectrum sharing system, which was not presented in~\cite{Alouini}.

\section{System Model} \label{Sec_SysMod}
\begin{figure}[t]
	\centering
	\includegraphics[width = 0.7\linewidth]{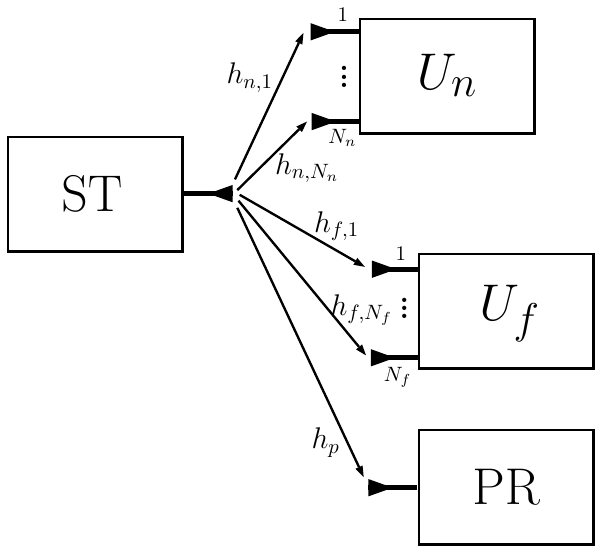}
	\caption{System model for underlay spectrum sharing. Here $U_n$ and $U_f$ are the secondary-user receivers.}
	\label{Fig_SysMod}
\end{figure}
Consider the system shown in Fig.~\ref{Fig_SysMod}, consisting of a secondary-user transmitter ST, a primary-user receiver PR and two secondary-user receivers $U_n$ and $U_f$. It is assumed that the ST and PR are each equipped with a single antenna, whereas $U_n$ and $U_f$ are equipped with $N_n (\geq 1)$ and $N_f (\geq 1)$ antennas, respectively. The channel fading coefficient between the ST and the PR is denoted by $h_p$, whereas that between the ST and the $i$-th antenna of $U_n$, and the ST and the $j$-th antenna of $U_f$ are denoted by $h_{n, i}$ and $h_{f, j}$, respectively, where $i \in \{1, 2, \ldots, N_n\}$ and $j \in \{1, 2, \ldots, N_f\}$. We assume that $h_p \sim \mathcal {CN}(0, \Omega_p = d_{p}^{-\alpha})$, $h_{n, i} \sim \mathcal {CN}(0, \Omega_n = d_n^{-\alpha})$ and $h_{f, j} \sim \mathcal {CN}(0, \Omega_f = d_f^{-\alpha})$ where $d_p$, $d_n$ and $d_f$ denote the distance between the ST and PR, ST and $U_n$, and ST and $U_f$, respectively, and $\alpha$ denotes the path-loss exponent. Throughout this paper, we assume that the ST has statistical channel state information (CSI) regarding the links between the ST and $U_n$, and ST and $U_f$, i.e., the knowledge of $\Omega_n$, $\Omega_f$ and the corresponding distribution of these links, whereas the availability of the CSI regarding the ST-PR link, i.e., the IL-CSI for different scenarios, is given in~Table~\ref{TableConditions}. It is also assumed that $d_n < d_f < d_p$, and we therefore refer to $U_n$ and $U_f$ as the near and far users respectively. We consider the scenario where the secondary network (consisting of ST, $U_n$ and $U_f$) is deployed outside the coverage of the primary-user transmitter, and therefore, we do not consider the interference at $U_n$ and $U_f$ caused by the primary-user transmitter. 

In the case of NOMA, the ST broadcasts a power-division multiplexed symbol 
$$\sqrt{a_n P_t(g_p)} x_n + \sqrt{a_f P_t(g_p)} x_f,$$ 
where $x_n$ and $x_f$ are the unit-energy symbols intended for users $U_n$ and $U_f$, respectively, $a_n$ and $a_f$ are the power allocation coefficients for users $U_n$ and $U_f$, respectively (we assume that $a_n < a_f$ and $a_n + a_f = 1$), $g_p \triangleq |h_p|^2$, and $P_t(g_p)$ is the total power transmitted from the ST. In general, $P_t(g_p)$ is a one-to-one mapping from the channel gain $g_p$ to the set of non-negative real numbers $\mathbb R_+$. Note that the notation $P_t(g_p)$ indicates that the ST has the perfect knowledge of the channel gain $g_p$; in the sequel, when we consider the case where the ST has no knowledge or only statistical knowledge of the channel gain $g_p$, we will denote the power transmitted from the ST simply by $P_t$. 

After receiving the signals, the user $U_u, u \in \{n, f\}$ first combines the signals using maximal-ratio combining (MRC), and therefore, the channel gain between the ST and $U_u$ is given by $g_u \triangleq \bs h_u^H \bs h_u$, where $\bs h_u = [h_{u, 1} \ h_{u, 2} \cdots \ h_{u, N_u}]^T \in \mathbb C^{N_u \times 1}$. The near user $U_n$ first decodes $x_f$ by considering the inter-user interference due to the presence of $x_n$ in the received signal as additional noise. It then applies successive interference cancellation (SIC) to remove $x_f$ from the received signal and then decodes its intended symbol $x_n$. On the other hand, the far user $U_f$ decodes $x_f$ directly considering the interference due to $x_n$ as additional noise. Assuming the noise contributions at all receiver nodes to be distributed as $\mathcal{CN}(0, 1)$, the instantaneous signal-to-interference-plus-noise ratio (SINR) and instantaneous signal-to-noise ratio (SNR) at $U_n$ to decode $x_f$ and $x_n$ are, respectively, given by 
\begin{align*}
	\gamma_n^{(f)} = \dfrac{a_f g_n P_t(g_p)}{a_n g_n P_t(g_p) + 1}, \qquad \gamma_n^{(n)} = a_n g_n P_t(g_p).
\end{align*}
Similarly, the instantaneous SINR at $U_f$ to decode $x_f$ is given by 
\begin{align*}
	\gamma_f^{(f)} = \dfrac{a_f g_f P_t(g_p)}{a_n g_f P_t(g_p) + 1}.
\end{align*}
Since symbol $x_n$ needs to be decoded only by $U_n$, the instantaneous achievable rate for $U_n$ is given by 
\begin{align*}
	\log_2\left[1 + \gamma_n^{(n)}\right] = \log_2 [1 + a_n g_p P_t(g_p)].
\end{align*}
On the other hand, since $x_f$ needs to be decoded by both users, the instantaneous achievable rate for $U_f$ is given by 
\begin{align*}
	& \min \left\{ \log_2 \left[ 1 + \gamma_n^{(f)}\right], \log_2 \left[ 1 + \gamma_f^{(f)}\right] \right\} \\
	= & \ \log_2 \left[ 1 + \dfrac{a_f g_{\min} P_t(g_p)}{a_n g_{\min} P_t(g_p) + 1}\right],
\end{align*}
where $g_{\min} \triangleq \min \{g_n, g_f\}$. 

In contrast to this, for the case of OMA, the ST transmits $\sqrt{P_t(g_p)} x_n$ and $\sqrt{P_t(g_p)} x_f$ to $U_n$ and $U_f$, respectively, in two orthogonal time slots. Therefore, the instantaneous SNR at $U_n$ and $U_f$ to decode the intended symbol is, respectively, given by 
\begin{align*}
	\hat \gamma_n = g_n P_t(g_p), \qquad \hat \gamma_f = g_f P_t(g_p).
\end{align*}
Throughout this paper, $f_{\mathcal X}(\cdot)$, $F_{\mathcal X}(\cdot)$, $F^{-1}_{\mathcal X}(\cdot)$ and $\mathcal F_{\mathcal X}(\cdot)$ denote the probability density function (PDF), cumulative distribution function (CDF), inverse distribution function (IDF), and complementary CDF (CCDF) of the random variable $\mathcal X$, respectively.

Next, we will present the achievable rate, outage probability and optimal power allocation for the spectrum sharing system.
\section{Secondary Performance for IntICSI} \label{Sec_IntCSI}
In this section, we assume that the ST has perfect instantaneous IL-CSI, and adapts its transmission power such that the instantaneous interference caused by the ST at the PR is less than a predefined threshold value $I$. In addition, we do not consider any power budget limit for the ST. Such a scenario is relevant when the ST is one with unlimited power, such as a base station. 
\subsection{Average achievable sum-rate} \label{Sec_IntSCI_Cap}
The average achievable sum-rate for the NOMA system is given by 
\begin{align}
	& C_{\mathrm{sum}} = \max_{P_t(g_p) \geq 0} \mathbb E_{g_p, g_n, g_f} \Bigg\{ \log_2 \left[ 1 + a_n g_n P_t(g_p)\right] \notag\\
	& \hspace{2.7cm}\left. + \log_2 \left[ 1 + \dfrac{a_f g_{\min} P_t(g_p)}{a_n g_{\min} P_t(g_p) + 1}\right] \right\}, \label{Sec_IntCSI_Csum_Def1}\\
	& \text{s.t. } g_p P_t(g_p) \leq I \label{Sec_IntCSI_PeakInt}.
\end{align}
The optimal transmit power $P_t^*(g_p)$ that maximizes $C_{\mathrm{sum}}$ in~\eqref{Sec_IntCSI_Csum_Def1} is given by $I/g_p$. Therefore, the expression for the average achievable sum-rate is given by 
\begin{multline} \label{Sec_IntCSI_Csum_Def}
	C_{\mathrm{sum}} = \mathbb E_{X_n} \left\{ \log_2 \left(1 + a_n I X_n\right)\right\}  \\
	   + \mathbb E_{X_{\min}} \left\{ \log_2 \left(1  + I X_{\min} \right)\right\} - \mathbb E_{X_{\min}} \left\{ \log_2 \left(1 + a_n I X_{\min}\right) \right\},
\end{multline}
where $X_n \triangleq g_n/g_p$ and $X_{\min} \triangleq g_{\min}/g_p$.
\begin{theorem} \label{Sec_IntCSI_TheoremCap}
	For the case of IntICSI, the average achievable sum-rate for the NOMA system is given by~\eqref{Sec_IntCSI_CapClosed}, shown on the next page, where $\Omega \triangleq \Omega_n \Omega_f / (\Omega_n + \Omega_f)$ and $G(\cdot)$ denotes Meijer's G-function. 
\end{theorem}
\begin{figure*}
\normalsize
\setcounter{MYtempeqncnt}{\value{equation}}
\begin{multline} \label{Sec_IntCSI_CapClosed}
	 C_{\mathrm{sum}} = \dfrac{1}{\ln 2} \left[ \dfrac{1}{\Gamma(N_n)} \left( \dfrac{\Omega_p}{\Omega_n a_n I} \right)^{N_n}  G_{3, 3}^{3, 2} \left( \dfrac{\Omega_p}{\Omega_n a_n I} \left \vert \begin{smallmatrix} -N_n, & - N_n, & 1 - N_n \\[0.6em] 0, & - N_n, & - N_n\end{smallmatrix} \right. \right) + \sum_{k = 0}^{N_f - 1} \dfrac{\Omega_p^{N_n + k}}{\Gamma(N_n) k! \Omega_n^{N_n} \Omega_f^k I^{N_n + k}} \right. \\
	 \times \left\{ G_{3, 3}^{3, 2} \left( \dfrac{\Omega_p}{\Omega I} \left \vert \begin{smallmatrix} - N_n - k, & - N_n - k, & 1 - N_n - k \\[0.6em] 0, & - N_n - k, & - N_n - k\end{smallmatrix} \right. \right) - \dfrac{1}{a_n^{N_n + k}} G_{3, 3}^{3, 2} \left( \dfrac{\Omega_p}{\Omega a_n I} \left \vert \begin{smallmatrix} - N_n - k, & - N_n - k, & 1 - N_n - k \\[0.6em] 0, & - N_n - k, & - N_n - k\end{smallmatrix} \right. \right)\right\} + \sum_{l = 0}^{N_n - 1} \dfrac{\Omega_p^{N_f + l}}{\Gamma(N_f) l!} \\
	 \left. \times \dfrac{1}{\Omega_f^{N_f} \Omega_n^l I^{N_f + l}}\left\{ G_{3, 3}^{3, 2} \left( \dfrac{\Omega_p}{\Omega I} \left \vert \begin{smallmatrix} - N_f - l, & - N_f - l, & 1 - N_f - l \\[0.6em] 0, & - N_f - l, & - N_f - l\end{smallmatrix} \right. \right) - \dfrac{1}{a_n^{N_f + l}} G_{3, 3}^{3, 2} \left( \dfrac{\Omega_p}{\Omega a_n I} \left \vert \begin{smallmatrix} - N_f - l, & - N_f - l, & 1 - N_f - l \\[0.6em] 0, & - N_f - l, & - N_f - l\end{smallmatrix} \right. \right) \right\} \right].
\end{multline}
\setcounter{equation}{\value{MYtempeqncnt}}
\hrulefill
\end{figure*}
\addtocounter{equation}{1}
\begin{proof}
	See Appendix~\ref{Sec_IntCSI_ProofCap}.
\end{proof}
On the other hand, for the case of OMA, the average achievable sum-rate is given by 
\begin{align}
	\mathscr C_{\mathrm{sum}} = 0.5 \sum_{u \in \{n, f\}} \mathbb E_{X_u} \left\{ \log_2 (1 + I X_u)\right\}, \label{Sec_IntCSI_CapOMA}
\end{align}
where $X_u \triangleq g_u/g_p$. Note that for a fair comparison,~in~\eqref{Sec_IntCSI_CapOMA} the peak tolerable interference value $I$ is the same as considered for the NOMA systems. Also, note that one transmission cycle in NOMA takes a single time slot, whereas for OMA (time-division multiplexed system), one transmission cycle takes two time slots; therefore, a factor of 0.5 is included in~\eqref{Sec_IntCSI_CapOMA} for fair comparison between NOMA and OMA systems. We do not provide a closed-form analysis for the case of OMA, as the focus of this paper is on the NOMA-based system. For the purpose of comparison, we will evaluate the performance of the OMA-based system numerically.

\subsubsection*{Asymptotic analysis}
For the limiting case where $I \to \infty$, it is straightforward to observe that the average achievable rate for $U_f$ converges to $\log_2 (1 + a_f/a_n) = \log_2(1/a_n)$. Therefore, the average achievable sum-rate for NOMA can be approximated by 
	\begin{align}
		& C_{\mathrm{sum}} \approx \mathbb E_{X_n} \{\log_2 (a_n I X_n)\} + \log_2\left( \dfrac{1}{a_n}\right) \notag \\
		= & \ \dfrac{N_n}{\Omega_n^{N_n} \Omega_p} \int_0^\infty x^{N_n - 1} \log_2(a_n I x) \left( \dfrac{x}{\Omega_n} + \dfrac{1}{\Omega_p}\right)^{-(N_n+1)} \mathrm dx \notag \\
		& \hspace{6cm}+ \log_2\left( \dfrac{1}{a_n}\right) \notag \\
		= & \ \log_2(e) \left[ \mathfrak E + \ln \left( \dfrac{a_n \Omega_n I}{\Omega_p} \right) + \mathfrak D(N_n) + \ln \left( \dfrac{1}{a_n}\right) \right], \label{Sec_IntCSI_CapClosed_Asymptotic}
	\end{align}
where $\mathfrak E$ is the Euler-Mascheroni constant and $\mathfrak D(\cdot)$ is the digamma function. This leads to a couple of important observations:
	\begin{itemize}
		\item For large $I$, there is no benefit (in terms of achievable rate) from having multiple antennas at the far user.
		\item For large $I$, when the number of antennas at the near user is increased from (say) $N_n^{(1)}$ to $N_n^{(2)}$, the gain in the near user's achievable rate is quantified by $\log_2 (e) \left[ \mathfrak D \left(N_n^{(2)} \right) - \mathfrak D \left( N_n^{(1)} \right) \right]$. This is also equal to the gain in average achievable sum-rate for the NOMA system.
	\end{itemize}

Next we present the analysis of the outage probability for both NOMA and OMA systems.
\subsection{Outage probability} \label{Sec_IntCSI_Out}
We assume that the target data rates for users $U_n$ and $U_f$ are the same, and are denoted by $r_{\mathrm{target}}$. Therefore, for the case of NOMA, the outage threshold is defined as $\theta \triangleq 2^{r_{\mathrm{target}}} - 1$.
\begin{theorem} \label{Sec_IntCSI_TheoremOut}
	For the case of IntICSI, the outage probability for the NOMA system is given by 
	\begin{align}
		\mathbb P_{\mathrm{out}} = 1 - \prod_{u \in \{n, f\}} \left[1 - \left( \dfrac{\Omega_p \xi_u}{\Omega_u I + \Omega_p \xi_u}\right)^{N_u} \right], \label{Sec_IntCSI_OutClosed}
	\end{align}
where $\xi_n \triangleq \theta \max \left\{ \tfrac{1}{a_f - a_n \theta}, \tfrac{1}{a_n}\right\}$	and $\xi_f \triangleq \tfrac{\theta}{a_f - a_n \theta}$. 
\end{theorem}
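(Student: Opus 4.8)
The plan is to show the system is in outage precisely when at least one secondary user fails to recover its intended symbol, to collapse each user's success condition into a single threshold on the scaled ratio $X_u \triangleq g_u/g_p$, and then to evaluate two elementary tail probabilities of these ratios. The product structure of~\eqref{Sec_IntCSI_OutClosed} strongly signals this per-user decomposition.

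First I would translate the decoding requirements into inequalities on $g_u P_t(g_p)$, substituting the optimal power $P_t^*(g_p) = I/g_p$ into the constraint~\eqref{Sec_IntCSI_PeakInt}. The near user must decode $x_f$ (to enable SIC) and then $x_n$: rearranging $\gamma_n^{(w)} \geq \theta$ gives $g_n P_t \geq \theta/(a_f - a_n\theta)$ and $\gamma_n^{(s)} \geq \theta$ gives $g_n P_t \geq \theta/a_n$, so $U_n$ succeeds iff $g_n P_t \geq \xi_n = \theta\max\{1/(a_f-a_n\theta),\, 1/a_n\}$, exactly as defined; likewise $\gamma_f^{(w)} \geq \theta$ gives that $U_f$ succeeds iff $g_f P_t \geq \xi_f = \theta/(a_f-a_n\theta)$. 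Here I would record the regularity condition $a_f > a_n\theta$, without which $\gamma_n^{(w)}$ and $\gamma_f^{(w)}$ cannot attain $\theta$ and the thresholds fail to be finite and positive. With $P_t = I/g_p$, each condition reads $X_u \geq \xi_u/I$.

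The technical core is the distribution of $X_u = g_u/g_p$. Since $U_u$ applies MRC over $N_u$ i.i.d.\ $\mathcal{CN}(0,\Omega_u)$ branches, $g_u = \bs h_u^H \bs h_u$ is Gamma-distributed with shape $N_u$ and scale $\Omega_u$, while $g_p = |h_p|^2$ is exponential with mean $\Omega_p$. I would obtain the CDF by conditioning on $g_p = y$ and averaging, $F_{X_u}(x) = \int_0^\infty F_{g_u}(xy)\, f_{g_p}(y)\, dy$, inserting the finite-sum form $F_{g_u}(z) = 1 - e^{-z/\Omega_u}\sum_{k=0}^{N_u-1}(z/\Omega_u)^k/k!$ and evaluating each integral with $\int_0^\infty y^k e^{-\beta y}\, dy = k!/\beta^{k+1}$ where $\beta = x/\Omega_u + 1/\Omega_p$. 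The $k!$ cancels, the sum becomes geometric in $x\Omega_p/(x\Omega_p+\Omega_u)$, and it telescopes to the compact $F_{X_u}(x) = \big(x\Omega_p/(x\Omega_p + \Omega_u)\big)^{N_u}$. Evaluating at $x = \xi_u/I$ yields the per-user outage probability $\big(\Omega_p\xi_u/(\Omega_u I + \Omega_p\xi_u)\big)^{N_u}$, precisely the bracketed factor in~\eqref{Sec_IntCSI_OutClosed}.

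To assemble the system outage I would then combine the two per-user no-outage probabilities multiplicatively, giving $\mathbb{P}_{\mathrm{out}} = 1 - \prod_{u\in\{n,f\}}\big[1 - (\Omega_p\xi_u/(\Omega_u I + \Omega_p\xi_u))^{N_u}\big]$. I expect the telescoping ratio-distribution evaluation to be the main computational obstacle, but the genuinely delicate step is this last product: the success events $\{X_n \geq \xi_n/I\}$ and $\{X_f \geq \xi_f/I\}$ are independent only conditionally on $g_p$, since both invoke the common interference-link gain, so the stated closed form corresponds to treating the two links' decoding events as unconditionally independent (consistent with the independent fading on the $U_n$ and $U_f$ links). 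An exact treatment would instead integrate the joint conditional success probability over the distribution of $g_p$; I would verify whether the paper adopts the independence form as the definition of system outage or argues its tightness, and flag this as the point most in need of justification.
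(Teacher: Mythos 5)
Your proposal is correct and follows essentially the same route as the paper's proof: convert the SINR/SNR conditions into the per-user thresholds $\xi_n, \xi_f$, substitute $P_t^*(g_p) = I/g_p$ so each success event reads $X_u \geq \xi_u/I$, evaluate the tail probability of $X_u = g_u/g_p$, and take the product over users. The only technical difference is cosmetic: you obtain $F_{X_u}(x) = \left(x\Omega_p/(x\Omega_p+\Omega_u)\right)^{N_u}$ by conditioning on $g_p$ and collapsing the geometric sum, whereas the paper integrates the tail of the ratio PDF $f_{X_u}$ already derived in Appendix~\ref{Sec_IntCSI_ProofCap}; both yield the same bracketed factor.

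Regarding the question you flag at the end: the paper adopts the factorization silently --- the first line of its proof writes $\mathbb P_{\mathrm{out}} = 1 - \Pr\big( \gamma_n^{(w)} \geq \theta, \gamma_n^{(s)} \geq \theta\big) \Pr\big( \gamma_f^{(w)} \geq \theta \big)$ with no justification --- and your concern is well founded, not merely notational. Since both users' SINRs contain the common random factor $P_t^*(g_p) = I/g_p$, the events $\{X_n \geq \xi_n/I\}$ and $\{X_f \geq \xi_f/I\}$ are only conditionally independent given $g_p$: for $N_n = N_f = 1$ the exact joint success probability is $\big(1 + \Omega_p\xi_n/(\Omega_n I) + \Omega_p\xi_f/(\Omega_f I)\big)^{-1}$, which differs from the product of the marginals used in \eqref{Sec_IntCSI_OutClosed}. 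The paper itself treats the analogous joint event exactly in the proof of Theorem~\ref{Sec_PowIntCSI_TheoremOut} (the term $\mathfrak X_2$, obtained by integrating over $g_p$), and the $P_{\mathrm{peak}} \to \infty$ limit of \eqref{Sec_PowIntCSI_OutClosed} recovers the exact joint form rather than \eqref{Sec_IntCSI_OutClosed} --- an internal inconsistency confirming that the product form in Theorem~\ref{Sec_IntCSI_TheoremOut} is an independence approximation rather than an identity. So your derivation matches the paper's proof step for step, and the caveat you isolate is a genuine gap in both.
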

\begin{proof}
	See Appendix~\ref{Sec_IntCSI_ProofOut}.
\end{proof}
It is noteworthy that the term $a_f - a_n \theta$ in the denominator of $\xi_f$ indicates that we require $a_f > a_n \theta$, i.e., $a_n < 1/(1 + \theta)$, otherwise both $U_n$ and $U_f$ will fail to decode $x_f$ and the outage probability of the system will always be equal to 1. A similar phenomenon was observed in~\cite{Vaibhav} and~\cite{RelaySelectionDing}.

On the other hand, for the case of OMA, the outage threshold is defined as $\Theta \triangleq 2^{2 r_{\mathrm{target}}} - 1$ and the outage probability is given by 
\begin{align}
	\mathscr P_{\mathrm{out}} = 1 - \prod_{u \in \{n, f\}} \Pr \left( I X_u \geq \Theta \right). \label{Sec_IntCSI_OutOMA_Def}
\end{align}
Here we do not provide a closed-form expression for the outage probability for the case of OMA, but we will rather evaluate via simulation for the purpose of comparison.
\subsubsection*{Asymptotic analysis} For the limiting case where $I \to \infty$, it can be shown using the binomial expansion that 
\begin{align*}
& \mathbb P_{\mathrm{out}} = \sum_{k = 0}^{\infty} (-1)^k \binom{N_f \!+ k \!- \! 1}{k} \Psi_f^{N_f + k} I^{-(N_f + k)} + \sum_{l = 0}^{\infty} (-1)^l \\
&  \times \binom{N_n + l - 1}{l} \Psi_n^{N_n + l} I^{-(N_n + l)} - \sum_{k = 0}^{N_f} \sum_{l = 0}^{N_s} (-1)^{k + l} \\
&  \times \binom{N_f \! + \! k \!-\! 1}{k} \binom{N_n\! + \! l \! - \!1}{l} \Psi_f^{N_f + k} \Psi_n^{N_n + l} I^{-(N_f + N_n + k + l)},
\end{align*}
where $\Psi_u \triangleq \Omega_p \xi_u / \Omega_u$. Using the preceding expression, it is straightforward to conclude that $\mathbb P_{\mathrm{out}}$ decays as $I^{-\min\{N_n, N_f\}}$ for large values of $I$.
\subsection{Optimal power allocation} \label{Sec_IntCSI_OptPower}
In this subsection, we attempt to find a closed-form expression for the optimal $a_n$, denoted by $a_n^*$, that \emph{minimizes} the outage probability of the spectrum sharing NOMA system. By differentiating~\eqref{Sec_IntCSI_OutClosed}, it can be observed that a closed-form expression for $a_n^*$ is not possible in general. However, in the following theorem we show that this is possible in the special case $N_n = N_f = 1$.  
\begin{theorem} \label{Sec_IntCSI_TheoremOpt}
	For the case of IntICSI with $N_n = N_f = 1$, $a_n^*$ is given by 
	\begin{align}
		a_n^* = & \ \dfrac{I \Omega_f + \Omega_p \theta}{I \{(1 + \theta) \Omega_f - \Omega_n\}} \notag \\
		& - \dfrac{\sqrt{(1 + \theta) (I \Omega_f + \Omega_p \theta) \{I \Omega_n + \Omega_p \theta (1 + \theta)\}}}{I (1 + \theta) \{(1 + \theta) \Omega_f - \Omega_n\}}. \label{Sec_IntCSI_asStar}
	\end{align}
\end{theorem}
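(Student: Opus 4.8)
The plan is to minimize the closed-form outage probability of Theorem~\ref{Sec_IntCSI_TheoremOut}, specialized to $N_n=N_f=1$, directly over $a_n$. Since the survival probability factorizes as $1-\mathbb P_{\mathrm{out}} = \prod_{u\in\{n,f\}}\frac{\Omega_u I}{\Omega_u I + \Omega_p\xi_u}$, minimizing $\mathbb P_{\mathrm{out}}$ is the same as maximizing this product. The one obstruction to differentiating is the $\max$ inside $\xi_n$: with $a_f = 1-a_n$ one has $\xi_f = \theta/[1-a_n(1+\theta)]$, and the first argument of $\xi_n$ exceeds the second exactly when $a_n > 1/(2+\theta)$. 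I would therefore split the admissible interval $a_n\in(0,1/(1+\theta))$ at $a_n=1/(2+\theta)$ and analyze each piece separately.

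On the upper piece $a_n\in[1/(2+\theta),1/(1+\theta))$ one has $\xi_n=\xi_f=\theta/[1-a_n(1+\theta)]$, which increases with $a_n$; hence both factors $\frac{\Omega_u I}{\Omega_u I+\Omega_p\xi_u}$ are strictly decreasing, the product is strictly decreasing, and $\mathbb P_{\mathrm{out}}$ is strictly increasing there. The minimizer therefore cannot be interior to this piece, so it suffices to optimize over the lower piece $a_n\in(0,1/(2+\theta))$, on which $\xi_n=\theta/a_n$.

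On this interval the objective becomes $f(a_n)=\frac{\Omega_n I a_n}{\Omega_n I a_n+\Omega_p\theta}\cdot\frac{\Omega_f I[1-a_n(1+\theta)]}{\Omega_f I[1-a_n(1+\theta)]+\Omega_p\theta}$, a product of a strictly increasing and a strictly decreasing positive factor. Setting $f'(a_n)=0$ via the product rule, cancelling the common factor $\Omega_p\theta$ and clearing the (nonvanishing) denominators collapses the condition to the single quadratic $(1+\theta)[(1+\theta)\Omega_f-\Omega_n]I\,a_n^2-2(1+\theta)(\Omega_f I+\Omega_p\theta)a_n+(\Omega_f I+\Omega_p\theta)=0$. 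Its discriminant factors cleanly as $4(1+\theta)(\Omega_f I+\Omega_p\theta)\{\Omega_n I+\Omega_p\theta(1+\theta)\}$, and taking the root carrying the minus sign yields precisely~\eqref{Sec_IntCSI_asStar}.

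The delicate part is not the differentiation but the root selection and the range verification. Because $f(0)=0$ and $f\to0$ as $a_n\to1/(1+\theta)^-$ (through the vanishing of $1-a_n(1+\theta)$) while $f>0$ in between, $f$ attains an interior maximum; since $f'=0$ is quadratic it has a unique stationary point in the interval, which is therefore the global minimizer of $\mathbb P_{\mathrm{out}}$. I would then verify that the minus-sign root is exactly the one lying in $(0,1/(2+\theta))$ (the plus-sign root falling outside), so that the earlier split does not affect the answer, and treat the degenerate case $(1+\theta)\Omega_f=\Omega_n$, where the leading coefficient vanishes, as the limiting linear equation giving $a_n^*=1/[2(1+\theta)]$. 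I expect this careful handling of the piecewise $\xi_n$ and of which quadratic root is admissible to be the main source of difficulty.
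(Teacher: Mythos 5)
Your proposal is correct and follows essentially the same route as the paper's own proof: split on the $\max$ defining $\xi_n$ at $a_n = 1/(2+\theta)$, rule out the upper piece (the paper does this by showing the stationary points there are infeasible, which is equivalent to your monotonicity observation), reduce the lower piece to the same quadratic $(1+\theta)\{(1+\theta)\Omega_f-\Omega_n\} I a_n^2 - 2(1+\theta)(\Omega_f I+\Omega_p\theta)a_n + (\Omega_f I+\Omega_p\theta)=0$, and select the minus-sign root. If anything you are more careful than the paper, which never verifies that the selected root actually lies in $(0,1/(2+\theta))$ (this does hold under the standing assumption $\Omega_n>\Omega_f$) and does not treat the degenerate case $(1+\theta)\Omega_f=\Omega_n$, where the closed form in~\eqref{Sec_IntCSI_asStar} is undefined.
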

\begin{proof}
	See Appendix~\ref{Sec_IntCSI_ProofOpt}.
\end{proof}
With simple algebraic manipulation, it can be shown that for the case when $N_n = N_f = 1$, the value of $a_n^*$ decreases with an increase in the value of $I$. For the case when $N_n > 1$ and $N_f > 1$, we find the optimal value of $a_n$ numerically. 
\section{Secondary Performance for IntSCSI} \label{Sec_IntNoCSI}
In a practical system, it is often not possible to obtain instantaneous CSI at the transmitter side. Motivated by this issue, we consider the scenario where the ST has only the statistical CSI regarding the ST-PR link, i.e., only the information regarding $\Omega_p$ and the distribution of $h_p$ is available at the ST (along with the statistical CSI of the ST-$U_u$ links). In this case, the quality-of-service (QoS) at the PR is protected through a statistical constraint which states that the probability that the interference caused by the ST to the PR is above the interference threshold $I$ should be lower than a preset threshold $\delta$. Denoting the power transmitted from ST by $P_t$, we have 
\begin{align}
	& \Pr(g_p P_t > I) \leq \delta \notag \\
	\implies & 1 - F_{g_p} (I/P_t) \leq \delta \implies P_t \leq \dfrac{I}{F_{g_p}^{-1}(1 - \delta)}. \label{Sec_IntNoCSI_PtGeneral}
\end{align}
Given that $g_p$ is an exponentially distributed random variable with mean value given by $\Omega_p$, the IDF of $g_p$ is given by $F_{g_p}^{-1}(x) = -\Omega_p \ln(1 - x)$. Substituting the expression for $F_{g_p}^{-1}(\cdot)$ into~\eqref{Sec_IntNoCSI_PtGeneral}, the optimal transmit power to maximize the average achievable sum-rate is given by 
\begin{align}
P_t^* = -I/(\Omega_p \ln \delta). \label{Sec_IntNoCSI_PtOptimal}
\end{align}

Next, we will provide analytical expressions for the average achievable sum-rate, outage probability and optimal power allocation in spectrum sharing NOMA system for the IntSCSI case.
\subsection{Average achievable sum-rate} \label{Sec_IntNoCSI_Cap}
The expression for the average achievable sum-rate in NOMA is be given by 
\begin{multline} \label{Sec_IntNoCSI_Cap_Def}
	C_{\mathrm{sum}} = \mathbb E_{g_n} \left\{ \log_2(1 + a_n g_n P_t^*)\right\}  \\
	 + \mathbb E_{g_{\min}} \left\{ \log_2 (1 + g_{\min} P_t^*)\right\} - \mathbb E_{g_{\min}} \left\{ \log_2 (1 + a_n g_{\min} P_t^*)\right\}. 
\end{multline}
\begin{theorem} \label{Sec_IntNoCSI_TheoremCap}
	For the case of IntSCSI, the average achievable sum-rate for NOMA is given by~\eqref{Sec_IntNoCSI_CapClosed}, shown on the next page.	
\end{theorem}
\begin{figure*}
\normalsize
\setcounter{MYtempeqncnt}{\value{equation}}
\begin{multline} \label{Sec_IntNoCSI_CapClosed}
	 C_{\mathrm{sum}} = \dfrac{1}{\ln 2} \left[ \dfrac{1}{\Gamma(N_n) \Omega_n^{N_n} (a_n P_t^*)^{N_n}} G_{2, 3}^{3, 1} \left( \dfrac{1}{\Omega_n a_n P_t^*} \left\vert \begin{smallmatrix} -N_n, & 1 - N_n \\[0.6em] 0, & - N_n, & - N_n \end{smallmatrix} \right. \right) + \dfrac{1}{\Gamma(N_n) \Omega_n^{N_n}} \sum_{k = 0}^{N_f - 1} \dfrac{1}{k! \Omega_f^k (P_t^*)^{N_n + k}} \right. \\
	 \times \left\{ G_{2, 3}^{3, 1} \left( \dfrac{1}{\Omega  P_t^*} \left\vert \begin{smallmatrix} -N_n-k, & 1 - N_n -k \\[0.6em] 0, & - N_n -k, & - N_n -k \end{smallmatrix} \right. \right) - \dfrac{1}{a_n^{N_n + k}}  G_{2, 3}^{3, 1} \left( \dfrac{1}{\Omega  a_n P_t^*} \left\vert \begin{smallmatrix} -N_n-k, & 1 - N_n -k \\[0.6em] 0, & - N_n -k, & - N_n -k \end{smallmatrix} \right. \right) \right\}  + \dfrac{1}{\Gamma(N_f) \Omega_f^{N_f}} \\
	 \times \left. \sum_{l = 0}^{N_n - 1} \dfrac{1}{l! \Omega_n^l (P_t^*)^{N_f + l}} \left\{ G_{2, 3}^{3, 1} \left( \dfrac{1}{\Omega  P_t^*} \left\vert \begin{smallmatrix} -N_f-l, & 1 - N_f -l \\[0.6em] 0, & - N_f -l, & - N_f - l \end{smallmatrix} \right. \right) - \dfrac{1}{a_n^{N_f + l}} G_{2, 3}^{3, 1} \left( \dfrac{1}{\Omega  a_n P_t^*} \left\vert \begin{smallmatrix} -N_f-l, & 1 - N_f -l \\[0.6em] 0, & - N_f -l, & - N_f - l \end{smallmatrix} \right. \right) \right\} \right].
\end{multline}
\setcounter{equation}{\value{MYtempeqncnt}}
\hrulefill
\end{figure*}
\addtocounter{equation}{1}
\begin{proof}
	See Appendix~\ref{Sec_IntNoCSI_ProofCap}.
\end{proof}
On the other hand, for the case of OMA, the expression for the average achievable sum-rate is given by 
\begin{align}
	\mathscr C_{\mathrm{sum}} = 0.5 \sum_{u \in \{n, f\}} \mathbb E_{g_u} \left\{ \log_2\left(1 - \dfrac{g_u I}{\Omega_p \ln \delta}\right)\right\}.  \label{Sec_IntNoCSI_CapOMA}
\end{align}
\subsubsection*{Asymptotic analysis}
Similar to the IntICSI case, for large~$I$, the average achievable sum-rate for NOMA can be approximated by 
	\begin{align}
		& C_{\mathrm{sum}} = \mathbb E_{g_n} \left\{ \log_2 \left( a_n g_n \dfrac{I}{\Omega_p (-\ln \delta)} \right) \right\} + \log_2 \left( \dfrac{1}{a_n}\right) \notag \\
		= & \dfrac{1}{\Gamma(N_n) \Omega_n^{N_n}} \int_0^\infty \log_2 \left( \dfrac{a_n I}{\Omega_p (-\ln \delta)} x\right) x^{N_n - 1} \notag \\
		& \hspace{3cm}\times \exp \left( \dfrac{-x}{\Omega_n} \right) \mathrm dx + \log_2 \left( \dfrac{1}{a_n} \right) \notag \\
		= & \ \log_2(e) \left[ \ln \left( \dfrac{a_n \Omega_n I}{\Omega_p (-\ln \delta)} \right) + \mathfrak D (N_n) + \ln \left( \dfrac{1}{a_n} \right)\right]. \label{Sec_IntNoCSI_CapClosed_Asymptotic}
	\end{align}
Similar to the case of IntICSI, for large $I$, there is no benefit (in terms of achievable rate) from having multiple antennas at the far user. Also, for large $I$, when the number of antennas at the near user is increased from $N_n^{(1)}$ to $N_n^{(2)}$, the gain in average achievable sum-rate can be quantified by $\log_2(e) \left[ \mathfrak D \left( N_n^{(2)}\right) - \mathfrak D \left( N_n^{(1)}\right) \right]$, which is the same as for the case of IntICSI.  
\subsection{Outage probability} \label{Sec_IntNoCSI_Out}
Following similar arguments as used in~Section~\ref{Sec_IntCSI_Out}, the outage probability for NOMA is given by 
\begin{align}
	\mathbb P_{\mathrm{out}} = & \ 1 - \Pr \left( \dfrac{a_f P_t^* g_n}{1 + a_n P_t^* g_n} \geq \theta, \ a_n P_t^* g_n \geq \theta \right) \notag \\
	& \hspace{3.2cm}\times \Pr \left( \dfrac{a_f P_t^* g_f}{1 + a_n P_t^* g_f} \geq \theta \right) \notag \\
	= & \ 1 - \Pr \left( g_n \geq \dfrac{\theta}{P_t^*} \max \left\{ \dfrac{1}{a_f - a_n \theta}, \ \dfrac{1}{a_n}\right\} \right) \notag \\
	& \hspace{3.2cm} \times \Pr \left( g_f \geq \dfrac{\theta}{P_t^* (a_f - a_n \theta)}\right) \notag \\
	= & \ 1 - \prod_{u \in \{n, f\}} \mathcal F_{g_u} \left(\dfrac{\xi_u}{P_t^*} \right) \notag \\
	= & \ 1 - \prod_{u \in \{n, f\}} \dfrac{1}{\Gamma(N_u) \Omega_u^{N_u}}\int_{\xi_u/P_t^*}^{\infty} x^{N_u - 1} \exp \left( \dfrac{-x}{\Omega_u} \right) \mathrm dx \notag \\
	= & \ 1 - \prod_{u \in \{n, f\}}  \dfrac{\Gamma[N_u, \xi_u/(\Omega_u P_t^*)]}{\Gamma(N_u)},  \label{Sec_IntNoCSI_OutClosed}
\end{align}
where the integral above is solved using~\cite[eqn.~(3.381-3),~p.~346]{Grad} and $\Gamma[\cdot, \cdot]$ denotes the upper-incomplete Gamma function.

On the other hand, for the case of OMA, the outage probability is given by 
\begin{align}
	\mathscr P_{\mathrm{out}} = & \ 1 - \prod_{u \in \{n, f\}} \Pr \left( P_t^* g_u \geq \Theta \right). \label{Sec_IntNoCSI_OutOMA}
\end{align}

\subsubsection*{Asymptotic analysis} Using~\cite[eqn.~(8.7.2),~p.~178]{NIST}, it can be shown that 
\begin{align*}
& \mathbb P_{\mathrm{out}} = \dfrac{1}{\Gamma(N_f)} \sum_{k = 0}^{\infty} \dfrac{(-1)^k \Upsilon_f^{N_f + k}}{k! (N_f + k)} I^{-(N_f + k)} + \dfrac{1}{\Gamma(N_n)}  \\
& \times \sum_{l = 0}^{\infty} \dfrac{(-1)^l \Upsilon_n^{N_n + l}}{l! (N_n + l)} I^{-(N_n + l)} - \dfrac{1}{\Gamma(N_f) \Gamma(N_n)}  \\
& \times \sum_{k = 0}^{\infty} \sum_{l = 0}^{\infty}\dfrac{(-1)^{k + l} \Upsilon_f^{N_f + k} \Upsilon_n^{N_n + l}}{k! l! (N_f + k) (N_n + l)} I^{-(N_f + N_n + k + l)}, 
\end{align*}
where $\Upsilon_u \triangleq -\xi_u \Omega_p \ln \delta /\Omega_u$. From the preceding equation, it is straightforward to conclude that $\mathbb P_{\mathrm{out}}$ decays as $I^{-\min\{N_n, N_f\}}$ for large values of $I$.

\subsection{Optimal power allocation} \label{Sec_IntNoCSI_OptPower}
\begin{theorem} \label{Sec_IntNoCSI_TheoremOpt}
For the case of IntSCSI with $N_n = N_f = 1$, $a_n^*$ is given by 
\begin{align}
		a_n^* = & \dfrac{\Omega_f}{(1 + \theta) \Omega_f - \Omega_n} - \dfrac{\sqrt{\Omega_n \Omega_f (1 + \theta)}}{(1 + \theta) \left\{(1 + \theta) \Omega_f -\Omega_n \right\}}. \label{Sec_IntNoCSI_asStar}
\end{align}
\end{theorem}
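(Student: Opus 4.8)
The plan is to exploit the collapse of the outage expression \eqref{Sec_IntNoCSI_OutClosed} when $N_n = N_f = 1$. Since $\Gamma(1) = 1$ and the upper-incomplete Gamma function satisfies $\Gamma[1, x] = e^{-x}$, equation \eqref{Sec_IntNoCSI_OutClosed} reduces to $\mathbb{P}_{\mathrm{out}} = 1 - \exp\left\{ -\xi_n/(\Omega_n P_t^*) - \xi_f/(\Omega_f P_t^*) \right\}$. Crucially, the optimal power $P_t^*$ in \eqref{Sec_IntNoCSI_PtOptimal} does not depend on $a_n$, and the exponential is monotone; hence minimising $\mathbb{P}_{\mathrm{out}}$ over $a_n$ is equivalent to minimising the function $g(a_n) \triangleq \xi_n/\Omega_n + \xi_f/\Omega_f$, where $\xi_n$ and $\xi_f$ are as defined below \eqref{Sec_IntCSI_OutClosed} with $a_f = 1 - a_n$. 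This is the same reduction used in the proof of Theorem~\ref{Sec_IntCSI_TheoremOpt}, and is in fact cleaner here because $P_t^*$ is a genuine constant rather than a function of $g_p$.

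Next I would resolve the maximum in $\xi_n = \theta \max\{1/(a_f - a_n \theta),\, 1/a_n\}$. A short comparison of the two (positive) candidates shows that $1/(a_f - a_n \theta) \geq 1/a_n$ holds precisely when $a_n \geq 1/(2 + \theta)$. On that sub-interval one has $\xi_n = \xi_f = \theta/\{1 - a_n(1+\theta)\}$, so $g$ becomes a positive multiple of $\theta/\{1 - a_n(1+\theta)\}$, which is strictly increasing in $a_n$. Consequently no interior minimiser can lie in that branch, and any optimiser must satisfy $a_n < 1/(2+\theta)$, where $\xi_n = \theta/a_n$. This confines the problem to minimising $g(a_n) = \theta/(\Omega_n a_n) + \theta/\{\Omega_f[1 - a_n(1+\theta)]\}$ on the open interval $(0,\, 1/(2+\theta))$, on which $g$ is smooth and, since each summand is convex there, strictly convex; a stationary point is therefore automatically the unique global minimiser.

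Setting $g'(a_n) = 0$ then gives $\Omega_n a_n^2 (1+\theta) = \Omega_f\{1 - a_n(1+\theta)\}^2$, a quadratic in $a_n$ whose discriminant simplifies neatly to $4(1+\theta)\Omega_n\Omega_f$. Solving and rearranging yields the two candidate roots $\Omega_f/[(1+\theta)\Omega_f - \Omega_n] \mp \sqrt{(1+\theta)\Omega_n\Omega_f}/\{(1+\theta)[(1+\theta)\Omega_f - \Omega_n]\}$, and selecting the admissible one (the root lying in $(0,\,1/(2+\theta))$, i.e. the one carrying the minus sign) reproduces exactly \eqref{Sec_IntNoCSI_asStar}.

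I expect the main obstacle to be the piecewise structure introduced by the maximum in $\xi_n$: one must rigorously exclude the branch $a_n \geq 1/(2+\theta)$ before differentiating, and then, after obtaining the quadratic, correctly discriminate between its two roots. The delicate point is that the sign of $(1+\theta)\Omega_f - \Omega_n$ can be either positive or negative (recall $\Omega_n > \Omega_f$ since $d_n < d_f$), which in principle changes which root is feasible. The cleanest resolution is to verify directly—via the factorisation $\sqrt{(1+\theta)\Omega_n\Omega_f} \gtrless (1+\theta)\Omega_f$ $\iff$ $\Omega_n \gtrless (1+\theta)\Omega_f$—that the minus-sign root always lands strictly inside $(0,\,1/(2+\theta))$ regardless of that sign, so that the single expression \eqref{Sec_IntNoCSI_asStar} is valid in all cases.
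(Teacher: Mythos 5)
Your proposal is correct, and its computational core --- splitting on which term attains the maximum in $\xi_n$, setting the derivative to zero, arriving at the quadratic $\Omega_n a_n^2(1+\theta) = \Omega_f\{1-a_n(1+\theta)\}^2$, and selecting the minus-sign root --- coincides with the paper's proof in Appendix~\ref{Sec_IntNoCSI_ProofOpt}. Where you differ is in the supporting structure, and on each point your version is tighter. First, you strip off the monotone map $x \mapsto 1 - e^{-x/P_t^*}$ and minimize $\xi_n/\Omega_n + \xi_f/\Omega_f$ directly, whereas the paper differentiates the full outage expression and must argue separately that the exponential factor never vanishes. Second, you identify the branch boundary explicitly as $a_n = 1/(2+\theta)$ and eliminate the branch $a_n \geq 1/(2+\theta)$ by strict monotonicity of the objective there; the paper only shows that this branch has no finite stationary point, which by itself does not exclude a minimum sitting at the branch boundary --- your monotonicity argument does. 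Third, you invoke strict convexity on $(0, 1/(2+\theta))$ to certify that the stationary point is the unique global minimizer; the paper stops at the first-order condition with no second-order or global argument. Fourth, the root-selection arguments are complementary rather than identical: the paper shows the plus-sign root is infeasible (it exceeds $1/(1+\theta)$ or is negative according to the sign of $(1+\theta)\Omega_f - \Omega_n$), while you aim to show the minus-sign root is feasible. The one soft spot on your side is exactly there: the factorization you cite, $\sqrt{(1+\theta)\Omega_n\Omega_f} \gtrless (1+\theta)\Omega_f \iff \Omega_n \gtrless (1+\theta)\Omega_f$, establishes positivity of the minus root in both sign cases but not the upper bound $a_n^* < 1/(2+\theta)$. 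That bound follows in one line from your own convexity framework: at the branch boundary, $1 - a_n(1+\theta) = a_n$, so $g'\bigl(1/(2+\theta)\bigr) = \theta(2+\theta)^2\bigl\{(1+\theta)/\Omega_f - 1/\Omega_n\bigr\} > 0$ (true since $\Omega_n > \Omega_f$), hence the unique minimizer lies strictly to the left of $1/(2+\theta)$; adding this line makes your argument complete and, in fact, more rigorous than the paper's.
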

\begin{proof}
	See~Appendix~\ref{Sec_IntNoCSI_ProofOpt}.
\end{proof}
\noindent It is important to note that in this case, the optimal value of $a_n$ does not depends on $I$ or $\Omega_p$.

Next, we present the analysis for the spectrum sharing system where a power budget constraint also exists at the ST, along with a peak interference constraint at the PR.
\section{Secondary Performance for PowIntICSI} \label{Sec_PowIntCSI}
For the case when the ST is a battery-operated device, the power transmitted from the ST is often constrained by a peak power budget at the ST. Therefore, in this section, we analyze the performance of the spectrum sharing system where the power transmitted from the ST is constrained by the peak interference caused at the PR as well as a peak power budget at the ST.
\subsection{Average achievable sum-rate}\label{Sec_PowIntCSI_Cap}
The average achievable sum-rate for the NOMA system is given by 
\begin{align}
	& C_{\mathrm{sum}} = \max_{P_t(g_p) \geq 0} \mathbb E_{g_p, g_n, g_f} \Bigg\{ \log_2[1 + a_n g_n P_t(g_p)] \notag \\
	& \hspace{2.5cm}\left. + \log_2 \left[ 1 + \dfrac{a_f g_{\min} P_t(g_p)}{a_n g_{\min} P_t(g_p) + 1} \right] \right\}, \label{Sec_PowIntCSI_Csum_Def1}\\
	& \text{s.t. } g_p P_t(g_p) \leq I  \label{Sec_PowIntCSI_PeakInt}, \\
	& \hspace{0.55cm}P_t(g_p) \leq P_{\mathrm{peak}}, \label{Sec_PowIntCSI_PeakPow}
\end{align}
where $P_{\mathrm{peak}}$ denotes the peak power budget at the ST.  Therefore, the optimal power to maximize the average achievable sum-rate in NOMA is given by 
\begin{align}
	P_t^*(g_p) = & \ \min \left\{ P_{\mathrm{peak}}, \dfrac{I}{g_p}\right\} = \begin{cases} P_{\mathrm{peak}}, & \text{if } g_p \leq \dfrac{I}{P_{\mathrm{peak}}} \\
	\dfrac{I}{g_p}, & \text{otherwise.} \end{cases} \label{Sec_PowIntCSI_PtOptimal}
\end{align}
Therefore, using~\eqref{Sec_PowIntCSI_Csum_Def1}-\eqref{Sec_PowIntCSI_PtOptimal}, the expression for the average achievable sum-rate for NOMA is given by 
\begin{align}
	C_{\mathrm{sum}} = & \ \mathbb E_{g_p, g_n} \left\{ \log_2 [1 + a_n g_n P_t^*(g_p)]\right\} \notag \\
	& + \mathbb E_{g_p, g_{\min}} \left\{ \log_2[1 + P_t^* (g_p)g_{\min}] \right\} \notag \\
	& - \mathbb E_{g_p, g_{\min}} \left\{ \log_2 [1 + a_n P_t^*(g_p)g_{\min}]\right\}, \label{Sec_PowIntCSI_Cap_Def2}
\end{align}
where $P_t^*(g_p)$ is given by~\eqref{Sec_PowIntCSI_PtOptimal}. It can be shown that in general, it is very difficult (if not impossible) to find an analytical expression for~\eqref{Sec_PowIntCSI_Cap_Def2}. Therefore, we present the analytical expression for the special case where $N_n = N_f = 1$. 
\begin{theorem} \label{Sec_PowIntCSI_TheoremCap}
	For the case of PowIntICSI with $N_n = N_f = 1$, the average achievable sum-rate for NOMA is given by 
	\begin{align}
		C_{\mathrm{sum}} = \dfrac{1}{\ln 2} \left[ T(a_n \Omega_n) + T(\Omega) - T(a_n \Omega)\right], \label{Sec_PowIntCSI_CapClosed}
	\end{align}
where $T(x)$ is given in~\eqref{Sec_PowIntCSI_T} (shown on the next page) in which $\operatorname{Shi}(\cdot)$ and $\operatorname{Chi}(\cdot)$ denote the hyperbolic sine and hyperbolic cosine integrals, respectively.
\end{theorem}
\begin{figure*}[t]
\normalsize
\setcounter{MYtempeqncnt}{\value{equation}}
\begin{multline} \label{Sec_PowIntCSI_T}
	T(x) = - \left[ 1 - \exp \left( \dfrac{-I}{\Omega_p P_{\mathrm{peak}}}\right)\right] \exp \left( \dfrac{1}{x P_{\mathrm{peak}}}\right) \operatorname{Ei} \left( \dfrac{-1}{x P_{\mathrm{peak}}} \right) + \left[ \operatorname{Ei} \left( \dfrac{-I}{\Omega_p P_{\mathrm{peak}}}\right) -\dfrac{\Omega_p}{\Omega_p - x I} \left\{\operatorname{Ei} \left( \dfrac{-I}{\Omega_p P_{\mathrm{peak}}} \right) \right. \right. \\
  \left. \left.  - \exp \left( \dfrac{\Omega_p - x I}{x \ \Omega_p P_{\mathrm{peak}}}\right) \operatorname{Ei}\left( \dfrac{-1}{x P_{\mathrm{peak}}}\right)\right\}  + \exp \left( \dfrac{\Omega_p - x I}{x \ \Omega_p P_{\mathrm{peak}}}\right) \left\{\operatorname{Shi}\left( \dfrac{1}{x P_{\mathrm{peak}}}\right) - \operatorname{Chi}\left( \dfrac{1}{x P_{\mathrm{peak}}}\right) 	\right\} \right].
\end{multline}
\setcounter{equation}{\value{MYtempeqncnt}}
\hrulefill
\end{figure*}
\addtocounter{equation}{1}
\begin{proof}
	See Appendix~\ref{Sec_PowIntCSI_ProofCap}.
\end{proof}

On the other hand, the corresponding average achievable sum-rate for OMA is given by 
\begin{align}
	\mathscr C_{\mathrm{sum}} = 0.5 \sum_{u \in \{n, f\}} \mathbb E_{g_u} \left\{ \log_2\left( 1 + g_u P_t^*(g_p) \right)\right\}. \label{Sec_PowIntCSI_CapOMA}
\end{align}
\subsubsection*{Asymptotic analysis} 
For the limiting case when $I \to \infty$, we give an analytical expression for the more general case where $N_n \geq 1$ and $N_f \geq 1$. In this case, for a finite value of $P_{\mathrm{peak}}$, we have 
\begin{align*}
\lim_{I \to \infty} P_t^*(g_p) = \lim_{I \to \infty} \min \left\{P_{\mathrm{peak}}, \frac{I}{g_p} \right\} = P_{\mathrm{peak}}.
\end{align*}
Therefore, the asymptotic expression for the average achievable sum-rate is given by 
	\begin{align*}
		& C_{\mathrm{sum}} = \mathbb E_{g_n} \left\{ \log_2 [1 + a_n g_n P_{\mathrm{peak}} ]\right\} \\
		& + \mathbb E_{g_{\min}} \!\left\{ \log_2[1 \!+\! g_{\min} P_{\mathrm{peak}}] \right\} \!-\! \mathbb E_{g_{\min}} \!\left\{ \log_2 [1\! +\! a_n g_{\min} P_{\mathrm{peak}}]\right\}.
	\end{align*}
It can be noted that the preceding equation is the same as~\eqref{Sec_IntNoCSI_Cap_Def}, with $P_t^*$ replaced by $P_{\mathrm{peak}}$. Therefore, a closed-form expression for the preceding equation can be given by~\eqref{Sec_IntNoCSI_CapClosed}, with $P_t^*$ replaced by $P_{\mathrm{peak}}$. 
\subsection{Outage probability}
Following the arguments in~Section~\ref{Sec_IntCSI_Out}, the outage probability for the case of NOMA is defined as 
\begin{align}
	\mathbb P_{\mathrm{out}} = & 1 - \Pr \left( g_n \geq \dfrac{\xi_n}{P_t^*(g_p)}\right) \Pr \left( g_f \geq \dfrac{\xi_f}{P_t^*(g_p)}\right). \label{Sec_PowIntCSI_Out_Def}
\end{align}
\begin{theorem} \label{Sec_PowIntCSI_TheoremOut}
	For the case of PowIntICSI, the outage probability for NOMA is given by~\eqref{Sec_PowIntCSI_OutClosed}, shown on the next page.
\end{theorem}
\begin{figure*}[t]
\normalsize
\setcounter{MYtempeqncnt}{\value{equation}}
\begin{multline} \label{Sec_PowIntCSI_OutClosed}
	\mathbb P_{\mathrm{out}} = 1 - \left\{ \left[ 1 - \exp \left( \dfrac{-I}{\Omega_p P_{\mathrm{peak}}}\right)\right] \prod_{u \in \{n, f\}} \dfrac{\Gamma[N_u, \xi_u/(\Omega_u P_{\mathrm{peak}})]}{\Gamma(N_u)} \right. \\
	\left.+ \dfrac{1}{\Omega_p} \sum_{k = 0}^{N_f - 1} \sum_{l = 0}^{N_n - 1} \dfrac{\xi_n^l \xi_f^k}{k! l! \Omega_n^l \Omega_f^k I^{k + l}}  \Gamma \left[ k + l + 1, \left( \dfrac{1}{\Omega_p} + \dfrac{\xi_n}{\Omega_n I} + \dfrac{\xi_f}{\Omega_f I} \right) \dfrac{I}{P_{\mathrm{peak}}}\right] \left( \dfrac{1}{\Omega_p} + \dfrac{\xi_n}{\Omega_n I} + \dfrac{\xi_f}{\Omega_f I} \right)^{-(k + l + 1)} \right\}.
\end{multline}
\setcounter{equation}{\value{MYtempeqncnt}}
\hrulefill
\end{figure*}
\addtocounter{equation}{1}
\begin{proof}
	See Appendix~\ref{Sec_PowIntCSI_ProofOut}.
\end{proof}

On the other hand, for the case of OMA, the outage probability is given by
\begin{align}
	\mathscr P_{\mathrm{out}} = 1 - \prod_{u \in \{n, f\}}\Pr \left( g_u \geq \dfrac{\Theta}{P_t^*(g_p)}\right). \label{Sec_PowIntCSI_OutOMA}
\end{align}
\subsubsection*{Asymptotic analysis} 
For the limiting case when $I \to \infty$ and $P_{\mathrm{peak}}$ is finite, $P_t^*(g_p)$ in~\eqref{Sec_PowIntCSI_PtOptimal} becomes equal to $P_{\mathrm{peak}}$. Using~\eqref{Sec_PowIntCSI_Out_Def},~\eqref{Sec_PowIntCSI_Out_Def2} and~\eqref{mathfrakX1}, it is straightforward to conclude that 
\begin{align}
	\mathbb P_{\mathrm{out}} = 1 - \prod_{u \in \{n, f\}} \dfrac{\Gamma[N_n, \xi_u/(\Omega_u P_{\mathrm{peak}})]}{\Gamma(N_u)}, \label{Sec_PowIntCSI_OutClosed_Asymptotic}
\end{align}
which is independent of the peak tolerable interference power~$I$. Therefore, the slope of the outage probability curve tends to zero for large values of~$I$, resulting in a outage floor. The asymptotic expression shown above also helps to quantify the reduction in the outage floor resulting from an increase in the number of antennas at the SRs.
\subsection{Optimal power allocation} \label{Sec_PowIntCSI_OptPower}
It can be shown that for the case of PowIntICSI, it is very complicated (if not impossible) to find a closed-form expression for $a_n^*$, even for the case where $N_n = N_f = 1$. Therefore, for the case of PowIntICSI, we find the optimal value of $a_n$ numerically. 
\section{Secondary Performance for PowIntSCSI} \label{Sec_PowIntNoCSI}
For the analysis in this section, we assume that the power transmitted from the ST is constrained by the peak interference constraint at the PR as well as the peak power budget at the ST. Additionally, we assume that only statistical IL-CSI is available at the ST.
\subsection{Average achievable sum-rate} \label{Sec_PowIntNoCSI_Cap}
The average achievable sum-rate for the case of NOMA is given by
\begin{align}
	& C_{\mathrm{sum}} = \max_{P_t \geq 0} \mathbb E_{g_n, g_f} \left\{ \log_2 (1 + a_n P_t g_n) + \log_2 (1 + P_t g_{\min}) \right. \notag \\
	& \hspace{4cm}\left. - \log_2 (1 + a_n P_t g_{\min})\right\}, \label{Sec_PowIntNoCSI_CapDef_Basic}\\
	& \text{s.t. } \Pr(g_p P_t \geq I) \leq \delta, \label{Sec_PowIntNoCSI_IntConst}\\
	& \hspace{0.6cm} P_t \leq P_{\mathrm{peak}}. \label{Sec_PowIntNoCSI_PeakPow}
\end{align}
Using~\eqref{Sec_PowIntNoCSI_IntConst},~\eqref{Sec_PowIntNoCSI_PeakPow} and~\eqref{Sec_IntNoCSI_PtOptimal}, the optimal transmit power to maximize the sum-rate for NOMA is given by 
\begin{align}
	P_t^* = \min \left\{ P_{\mathrm{peak}}, \dfrac{-I}{\Omega_p \ln \delta}\right\}. \label{Sec_PowIntNoCSI_PtOptimal}
\end{align}
Therefore, using~\eqref{Sec_PowIntNoCSI_CapDef_Basic}-\eqref{Sec_PowIntNoCSI_PtOptimal}, an expression for the average achievable sum-rate for NOMA is given by
\begin{multline} \label{Sec_PowIntNoCSI_Cap_Def}
	C_{\mathrm{sum}} = \mathbb E_{g_n} \left\{ \log_2(1 + a_n g_n P_t^*)\right\}  \\
	 + \mathbb E_{g_{\min}} \left\{ \log_2 (1 + g_{\min} P_t^*)\right\} - \mathbb E_{g_{\min}} \left\{ \log_2 (1 + a_n g_{\min} P_t^*)\right\}. 
\end{multline}
Note that~\eqref{Sec_PowIntNoCSI_Cap_Def} is same as~\eqref{Sec_IntNoCSI_Cap_Def}, however, the definition of $P_t^*$ in~\eqref{Sec_PowIntNoCSI_Cap_Def} and~\eqref{Sec_IntNoCSI_Cap_Def} are different. Therefore, an analytical expression for~\eqref{Sec_PowIntNoCSI_Cap_Def} is given by~\eqref{Sec_IntNoCSI_CapClosed}, with $P_t^*$ given by~\eqref{Sec_PowIntNoCSI_PtOptimal}.

On the other hand, for the case of OMA, the average achievable sum-rate is given by 
\begin{align}
	\mathscr C_{\mathrm{sum}} = 0.5 \sum_{u \in \{n, f\}} \mathbb E_{g_u} \left\{ \log_2 (1 + g_u P_t^*)\right\}, \label{Sec_PowIntNoCSI_CapOMA}
\end{align}
where $P_t^*$ is given by~\eqref{Sec_PowIntNoCSI_PtOptimal}.
\subsubsection*{Asymptotic analysis}
Similar to the case of PowIntICSI, a closed-form expression for the asymptotic achievable sum-rate for the PowIntSCSI NOMA system can be given by~\eqref{Sec_IntNoCSI_CapClosed}, with $P_t^*$ replaced by $P_{\mathrm{peak}}$. This means that for the case of NOMA, the average achievable sum-rate for both PowIntICSI and PowIntSCSI systems are exactly the same for a sufficiently large value of $I$ and a finite value of $P_{\mathrm{peak}}$. This is intuitive because when the value of $I$ is sufficiently large, the (average) transmitted power from the ST for both systems is the same (equal to $P_{\mathrm{peak}}$) and no gain is obtained from the instantaneous knowledge of the ST-PR link CSI.
\subsection{Outage probability} \label{Sec_PowIntNoCSI_Out}
Following arguments similar to those in the previous subsection, the outage probability for NOMA is given by~\eqref{Sec_IntNoCSI_OutClosed}, where $P_t^*$ is given by~\eqref{Sec_PowIntNoCSI_PtOptimal}.

On the other hand, for the case of OMA, the outage probability is given by 
\begin{align}
	\mathscr P_{\mathrm{out}} = 1 - \prod_{u \in \{n, f\}} \Pr \left( g_u \geq \dfrac{\Theta}{P_t^*}\right). \label{Sec_PowIntNoCSI_OutOMA}
\end{align}
\subsubsection*{Asymptotic analysis} Following~\eqref{Sec_IntNoCSI_OutClosed} and~\eqref{Sec_PowIntNoCSI_PtOptimal}, an asymptotic expression for the outage probability for the case when~$I \to \infty$ is given by 
	\begin{align}
	\mathbb P_{\mathrm{out}} = 1 - \prod_{u \in \{n, f\}} \dfrac{\Gamma[N_n, \xi_u/(\Omega_u P_{\mathrm{peak}})]}{\Gamma(N_u)}, \label{Sec_PowIntNoCSI_OutClosed_Asymptotic}
	\end{align}
which is the same as for the case of PowIntICSI. This means that the for the case of NOMA, the outage probability for the PowIntICSI and PowIntSCSI systems are exactly the same for a sufficiently large value of $I$ and a finite value of $P_{\mathrm{peak}}$. 
\subsection{Optimal power allocation} \label{Sec_PowIntNoCSI_OptPower}
Since the expression for the outage probability for the NOMA system in the case of PowIntSCSI is the same as that for the case of IntSCSI with the expression for $P_t^*$ given by~\eqref{Sec_PowIntCSI_PtOptimal} which is independent of $a_n$, therefore, following the arguments in~Section~\ref{Sec_IntNoCSI_OptPower}, for the case where $N_n = N_f = 1$, $a_n^*$ is given by 
\begin{align}
		a_n^* = & \dfrac{\Omega_f}{(1 + \theta) \Omega_f - \Omega_n} - \dfrac{\sqrt{\Omega_n \Omega_f (1 + \theta)}}{(1 + \theta) \left\{(1 + \theta) \Omega_f -\Omega_n \right\}}. \label{Sec_PowIntNoCSI_asStar}
\end{align}
Similar to case in~Section~\ref{Sec_IntNoCSI_OptPower}, $a_n^*$ does not depends on $P_{\mathrm{peak}}$, $I$ or $\Omega_p$.
\section{Secondary Performance with One-Bit Feedback} \label{Sec_OneBit}
In this section, we consider the scenario where the ST does not have any CSI regarding the ST-PR link. We rather assume that the PR has instantaneous CSI regarding the ST-PR link. Also, we assume that the power transmitted from the ST is constrained by a peak interference constraint at the PR, as well as a peak power budget constraint at the ST. 

Based on the peak power budget at the ST, and the peak interference constraint at the PR, the PR calculates a threshold value $\tau$ for the channel gain $g_p$. If the instantaneous channel gain of the ST-PR link is less than $\tau$, the PR sends a ``$1$'' to the ST via a low-bandwidth zero-delay feedback link, and sends a ``$0$'' otherwise. For the case when ST receives a ``1'' from the PR, it transmits its signal to $U_n$ and $U_f$ with full power $P_{\mathrm{peak}}$, otherwise it remains silent. Therefore, the transmit power from the ST is modeled as 
\begin{align}
	P_t^* = \begin{cases} P_{\mathrm{peak}}, & \text{if } g_p \leq \tau = \dfrac{I}{P_{\mathrm{peak}}} \\ 0, & \text{otherwise}.\end{cases} \label{Sec_OneBit_OptimalPower}
\end{align}
Note that the power transmission scheme in~\eqref{Sec_OneBit_OptimalPower} ensures that the interference caused by the ST at the PR is either less than or equal to the peak tolerable interference at the PR or zero.
\subsection{Average achievable sum-rate}
The average achievable sum-rate for the case of NOMA is given by 
\begin{align}
	C_{\mathrm{sum}} = & \mathbb E_{g_n, g_f} \left\{ \log_2(1 + a_n P_t^* g_n) + \log_2(1 + P_t^* g_{\min}) \right. \notag \\
	& \hspace{3cm}\left. - \log_2(1 + a_n P_t^* g_{\min})\right\} \notag \\
	= & \ \Pr (g_p \leq \tau) \left[ \mathbb E_{g_n} \left\{ \log_2(1 + a_n g_n P_{\mathrm{peak}}) \right\} \right. \notag \\
	& \hspace{2cm}+ \mathbb E_{g_{\min}} \left\{ \log_2(1 + g_{\min} P_{\mathrm{peak}}) \right\}  \notag \\
	& \hspace{2cm}\left. - \mathbb E_{g_{\min}} \left\{ \log_2(1 + a_n g_{\min} P_{\mathrm{peak}}) \right\} \right]. \label{Sec_OneBit_CapDef}
\end{align}
Following the steps in~Appendix~\ref{Sec_IntNoCSI_ProofCap}, an analytical expression for~\eqref{Sec_OneBit_CapDef} is given by~\eqref{Sec_OneBit_CapClosed}, shown on the next page. 
\begin{figure*}
\normalsize
\setcounter{MYtempeqncnt}{\value{equation}}
\begin{multline} \label{Sec_OneBit_CapClosed}
	 C_{\mathrm{sum}} = \dfrac{1 - \exp (-\tau/\Omega_p)}{\ln 2} \left[ \dfrac{1}{\Gamma(N_n) \Omega_n^{N_n} (a_n P_{\mathrm{peak}})^{N_n}} G_{2, 3}^{3, 1} \left( \dfrac{1}{\Omega_n a_n P_{\mathrm{peak}}} \left\vert \begin{smallmatrix} -N_n, 1 - N_n \\[0.6em] 0, - N_n, - N_n \end{smallmatrix} \right. \right) + \dfrac{1}{\Gamma(N_n) \Omega_n^{N_n}} \sum_{k = 0}^{N_f - 1} \dfrac{1}{k! \Omega_f^k} \right. \\
	 \times \dfrac{1}{(P_{\mathrm{peak}})^{N_n + k}}\left\{ G_{2, 3}^{3, 1} \left( \dfrac{1}{\Omega  P_{\mathrm{peak}}} \left\vert \begin{smallmatrix} -N_n-k, 1 - N_n -k \\[0.6em] 0, - N_n -k, - N_n -k \end{smallmatrix} \right. \right) - \dfrac{1}{a_n^{N_n + k}}  G_{2, 3}^{3, 1} \left( \dfrac{1}{\Omega  a_n P_{\mathrm{peak}}} \left\vert \begin{smallmatrix} -N_n-k, 1 - N_n -k \\[0.6em] 0,  - N_n -k,  - N_n -k \end{smallmatrix} \right. \right) \right\}  + \dfrac{1}{\Gamma(N_f) \Omega_f^{N_f}} \\
	 \times \left. \sum_{l = 0}^{N_n - 1} \dfrac{1}{l! \Omega_n^l (P_{\mathrm{peak}})^{N_f + l}} \left\{ G_{2, 3}^{3, 1} \left( \dfrac{1}{\Omega  P_{\mathrm{peak}}} \left\vert \begin{smallmatrix} -N_f-l, 1 - N_f -l \\[0.6em] 0, - N_f -l, - N_f - l \end{smallmatrix} \right. \right) - \dfrac{1}{a_n^{N_f + l}} G_{2, 3}^{3, 1} \left( \dfrac{1}{\Omega  a_n P_{\mathrm{peak}}} \left\vert \begin{smallmatrix} -N_f-l, 1 - N_f -l \\[0.6em] 0, - N_f -l, - N_f - l \end{smallmatrix} \right. \right) \right\} \right].
\end{multline}
\setcounter{equation}{\value{MYtempeqncnt}}
\hrulefill
\end{figure*}
\addtocounter{equation}{1}

On the other hand, for the case of OMA, the average achievable sum-rate is given by 
\begin{align}
	\mathscr C_{\mathrm{sum}} = 0.5 \sum_{u \in \{n, f\}} \mathbb E_{g_u} \left\{ \log_2(1 + g_u P_t^*)\right\}, \label{Sec_OneBit_CapOMA}
\end{align}
where $P_t^*$ is given by~\eqref{Sec_OneBit_OptimalPower}.
\begin{figure}[t]
\centering
  \includegraphics[width = 0.9\linewidth]{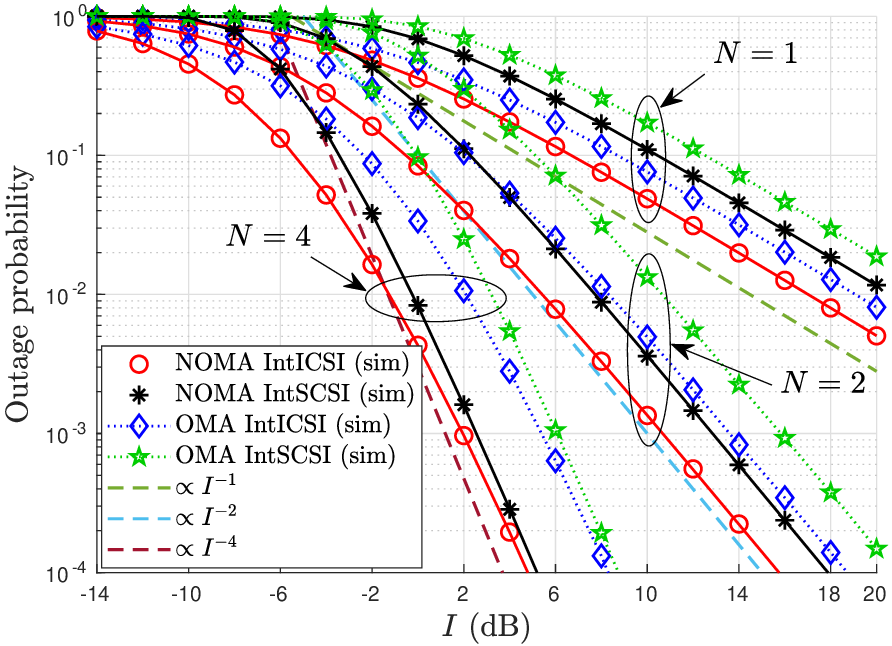}
\caption{Outage probability for the IntICSI and IntSCSI systems. Here solid lines are plotted using~\eqref{Sec_IntCSI_OutClosed} and~\eqref{Sec_IntNoCSI_OutClosed}. The constant-slope dotted lines show the slope of $\mathbb P_{\mathrm{out}}$ for large $I$.}
\label{Fig_OutInt}
\end{figure}

\begin{figure*}[t]
\centering
\begin{minipage}{.32\textwidth}
  \centering
    \includegraphics[width = 0.9\linewidth, height = 4cm]{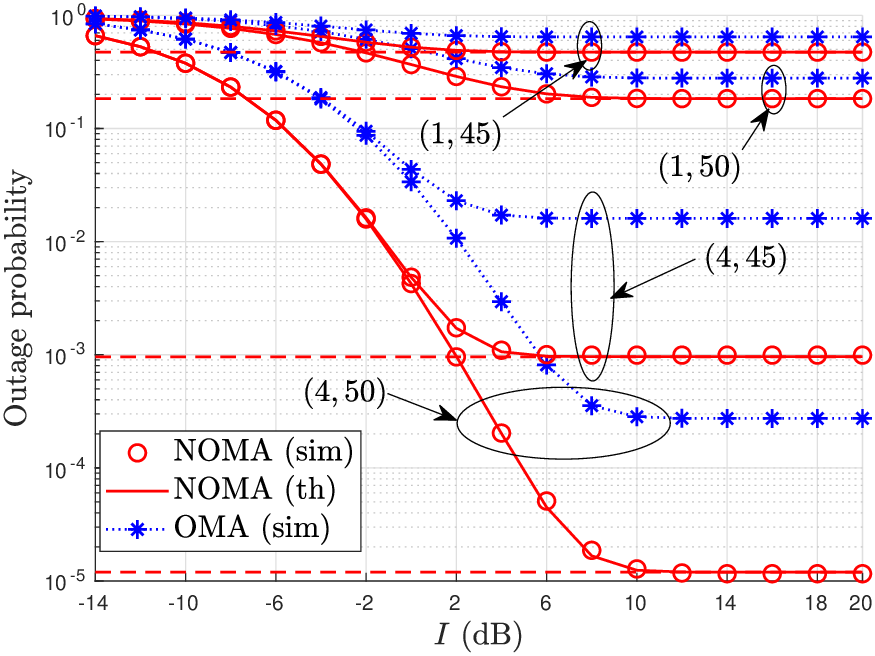}
\caption{Outage probability for the PowIntICSI system. The solid lines are drawn using~\eqref{Sec_PowIntCSI_OutClosed}. The tuples in the parenthesis denote $(N, P_{\mathrm{peak}})$, and the horizontal dotted lines drawn using~\eqref{Sec_PowIntCSI_OutClosed_Asymptotic} denote the asymptotic $\mathbb P_{\mathrm{out}}$ for the NOMA system.}
\label{Fig_OutPowIntCSI}
\end{minipage}
\hfill 
\begin{minipage}{.32\textwidth}
  \centering
	\includegraphics[width = 0.9\linewidth, height = 4cm]{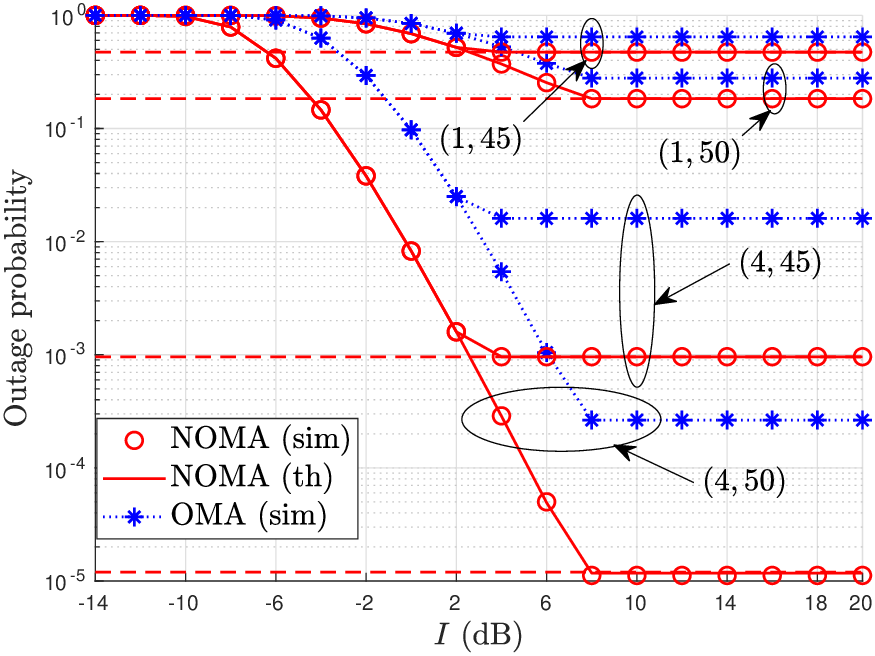}
\caption{Outage probability for the PowIntSCSI system. The solid lines are drawn using~\eqref{Sec_IntNoCSI_OutClosed} with $P_t^*$ given by~\eqref{Sec_PowIntNoCSI_PtOptimal}. The tuples in the parenthesis denote $(N, P_{\mathrm{peak}})$, and the horizontal dotted lines drawn using~\eqref{Sec_PowIntNoCSI_OutClosed_Asymptotic} denote the asymptotic $\mathbb P_{\mathrm{out}}$ for the NOMA system.}
	\label{Fig_OutPowIntNoCSI}
\end{minipage}
\hfill
\begin{minipage}{.32\textwidth}
  \centering
  \includegraphics[width = 0.9\linewidth, height = 4cm]{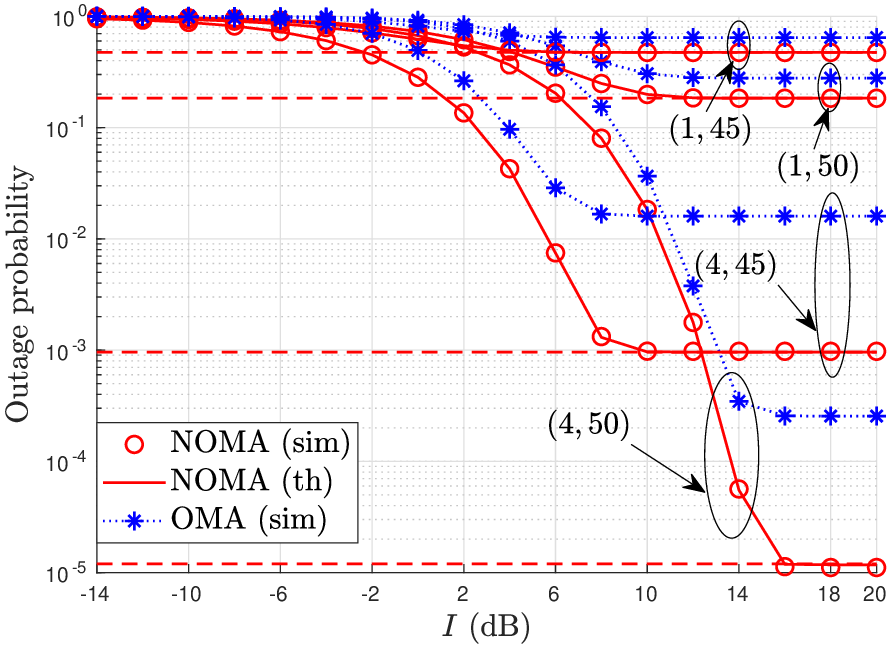}
  \caption{Outage probability for the PowIntOneBit system. The solid lines are drawn using~\eqref{Sec_OneBit_OutClosed}. The tuples in the parenthesis denote $(N, P_{\mathrm{peak}})$, and the horizontal dotted lines drawn using~\eqref{Sec_OneBit_OutClosed_Asymptotic} denote the asymptotic $\mathbb P_{\mathrm{out}}$ for the NOMA system.}
  \label{Fig_OutOneBit}
\end{minipage}%
\end{figure*}

\begin{figure*}[t]
\centering
\begin{minipage}{.32\textwidth}
  \centering
  \includegraphics[width = 0.9\linewidth, height = 4cm]{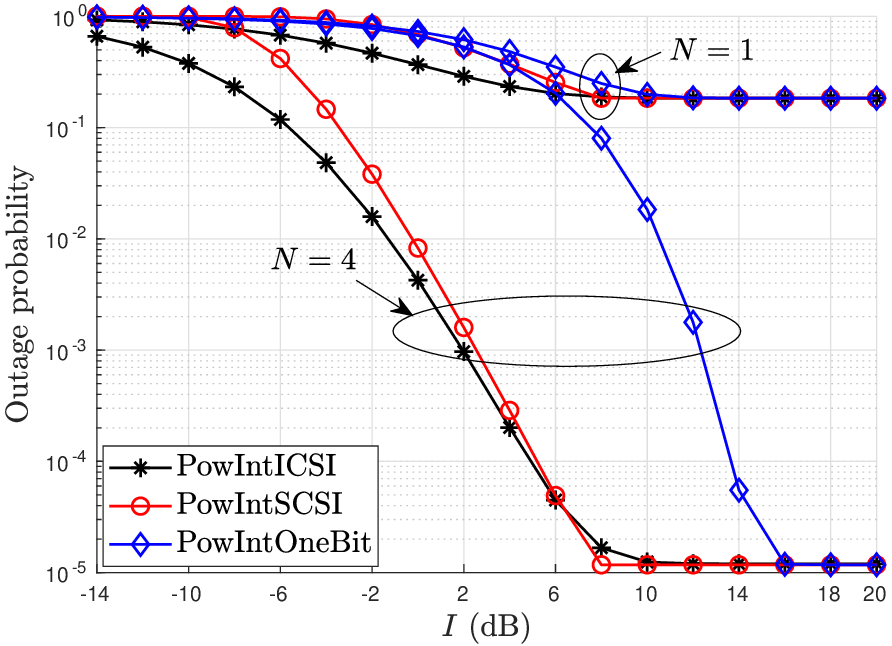}
  \caption{Analytical plots for the outage probability of the PowIntICSI, PowIntSCSI and PowIntOneBit NOMA systems. Here $P_{\mathrm{peak}}$ is fixed at 50 dB.}
  \label{Fig_OutPow}  
\end{minipage}
\hfill 
\begin{minipage}{.32\textwidth}
  \centering
  \includegraphics[width = 0.9\linewidth]{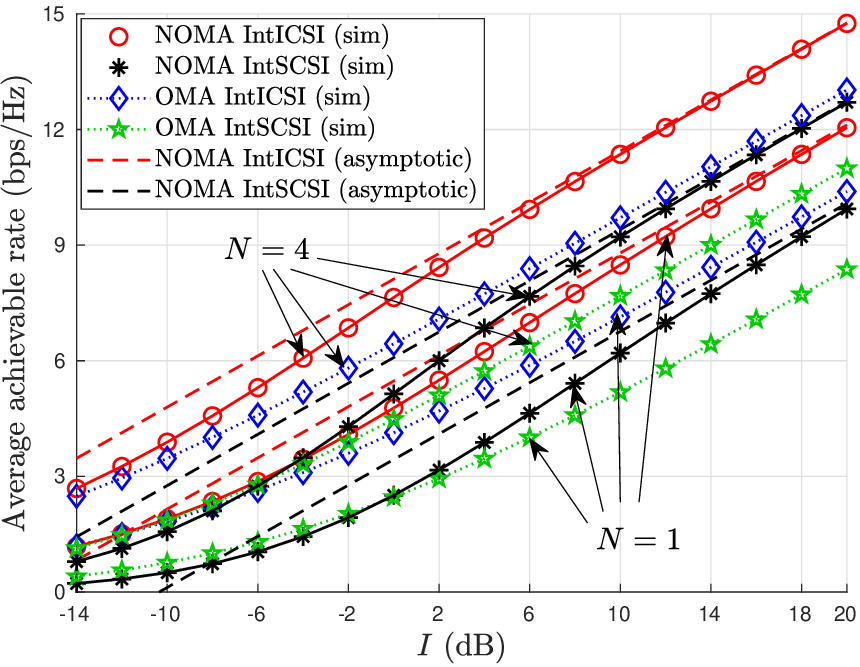}
  \caption{Average achievable sum-rate for the IntICSI and IntSCSI systems. Here solid lines are plotted using~\eqref{Sec_IntCSI_CapClosed} and~\eqref{Sec_IntNoCSI_CapClosed}, while the dotted lines are drawn using~\eqref{Sec_IntCSI_CapClosed_Asymptotic} and~\eqref{Sec_IntNoCSI_CapClosed_Asymptotic}.}
  \label{Fig_CapInt}		
\end{minipage}
\hfill
\begin{minipage}{.32\textwidth}
  \centering
  \includegraphics[width = 0.9\linewidth, height = 4cm]{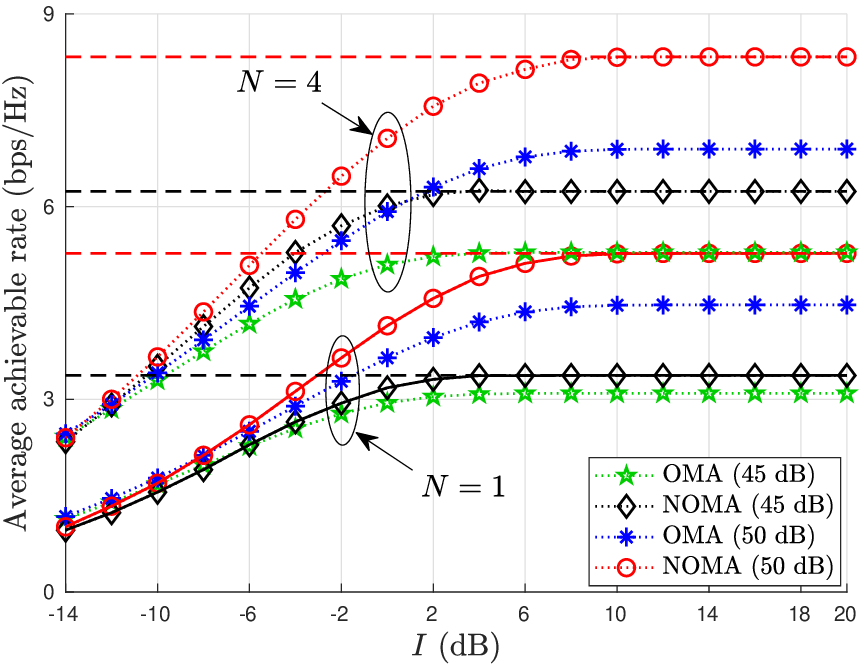}
  \caption{Average achievable sum-rate for the PowIntICSI system. Here markers with dotted lines represent the simulation results, the solid lines drawn using~\eqref{Sec_PowIntCSI_CapClosed} represent the analytical results, and the horizontal dotted lines drawn using~\eqref{Sec_IntNoCSI_CapClosed} with $P_t^*$ replaced by $P_{\mathrm{peak}}$, denote the asymptotic results (for NOMA).}
  \label{Fig_CapPowIntCSI}
\end{minipage}%
\end{figure*}

\subsubsection*{Asymptotic analysis} 
For the case when $I \to \infty$, the transmit power $P_t^*$ in~\eqref{Sec_OneBit_OptimalPower} becomes equal to $P_{\mathrm{peak}}$, and $\Pr(g_p \leq \tau) \to 1$. Therefore, using~\eqref{Sec_OneBit_CapDef}, a closed-form expression for the asymptotic average achievable sum-rate is given by~\eqref{Sec_IntNoCSI_CapClosed}, with $P_t^*$ in~\eqref{Sec_IntNoCSI_CapClosed} replaced by $P_{\mathrm{peak}}$.

This means that for the case when the ST has a fixed power budget, the achievable sum-rate will always converge to the same value for NOMA systems in the high-$I$ regime, regardless of the level of IL-CSI knowledge at the ST.  
\subsection{Outage probability}
Similar to Section~\ref{Sec_IntCSI_Out}, the outage probability for the case of NOMA is given by 
\begin{align}
	& \ \mathbb P_{\mathrm{out}} = 1 - \Pr \left( \dfrac{a_f P_t^* g_n}{a_n P_t^* g_n + 1} \geq \theta, a_n P_t^* g_n \geq \theta \right) \notag \\
	& \hspace{4cm}\times \Pr \left( \dfrac{a_f P_t^* g_f}{a_n P_t^* g_f + 1} \geq \theta \right) \notag \\
	= & \ 1 - \Pr \left( g_p \leq \tau \right) \Pr \left( g_n \geq \dfrac{\xi_n}{P_{\mathrm{peak}}}\right) \Pr \left( g_f \geq \dfrac{\xi_f}{P_{\mathrm{peak}}} \right) \notag \\
	= & \ 1 - \left\{\! 1\!  -\! \exp \left( \dfrac{-\tau}{\Omega_p}\right)\right\} \prod_{u \in \{n, f\}}  \dfrac{\Gamma[N_u, \xi_u/(\Omega_u P_{\mathrm{peak}})]}{\Gamma(N_u)}. \label{Sec_OneBit_OutClosed}
\end{align}

On the other hand, for the case of OMA, the outage probability is given by 
\begin{align}
	\mathscr P_{\mathrm{out}} = 1 - \prod_{u \in \{n, f\}} \Pr \left(g_u \geq \dfrac{\Theta}{P_t^*} \right), \label{Sec_OneBit_OutOMA}
\end{align}
where $P_t^*$ is given in~\eqref{Sec_OneBit_OptimalPower}.
\subsubsection*{Asymptotic analysis} For the limiting case $I \to \infty$, we have $(1 - \exp(-\tau/\Omega_p)) \to 1$. Therefore, using~\eqref{Sec_OneBit_OutClosed}, we have 
	\begin{align}
	\mathbb P_{\mathrm{out}} = 1 - \prod_{u \in \{n, f\}} \dfrac{\Gamma[N_n, \xi_u/(\Omega_u P_{\mathrm{peak}})]}{\Gamma(N_u)}, \label{Sec_OneBit_OutClosed_Asymptotic}
	\end{align}
which is the same as obtained for the case of PowIntICSI and PowIntSCSI. This occurs because for the limiting case of $I \to \infty$, the ST always receives a feedback `1' from the PT, and therefore, always transmits at a constant power $P_{\mathrm{peak}}$.

This means that for the case when the ST has a fixed power budget, the outage floor will always be the same for NOMA systems, regardless of the level of IL-CSI knowledge at the ST. 
\subsection{Optimal power allocation} \label{Sec_OneBit_OptPower}
Note that the expression for the outage probability for the NOMA system in the case of PowIntOneBit given in~\eqref{Sec_OneBit_OutClosed} is similar to that for the case of IntSCSI given in~\eqref{Sec_IntNoCSI_OutClosed}, where the difference is that $P_t^*$ in~\eqref{Sec_IntNoCSI_OutClosed} is replaced by $P_{\mathrm{peak}}$ in~\eqref{Sec_OneBit_OutClosed}, and there exists an extra factor of $(1 - \exp(-\tau/\Omega_p))$. Since both $P_{\mathrm{peak}}$ and $(1 - \exp(-\tau/\Omega_p))$ are independent of $a_n$, therefore, following the arguments in~Section~\ref{Sec_IntNoCSI_OptPower}, for the case where $N_n = N_f = 1$, $a_n^*$ is given by 
\begin{align}
		a_n^* = & \dfrac{\Omega_f}{(1 + \theta) \Omega_f - \Omega_n} - \dfrac{\sqrt{\Omega_n \Omega_f (1 + \theta)}}{(1 + \theta) \left\{(1 + \theta) \Omega_f -\Omega_n \right\}}. \label{Sec_OneBit_asStar}
\end{align}
Similar to case in~Section~\ref{Sec_IntNoCSI_OptPower}, $a_n^*$ does not depends on $P_{\mathrm{peak}}$, $I$, $\Omega_p$ or $\tau$.
\section{Results and Discussion} \label{Sec_Results}
In this section, we present the simulation and analytical results for the performance of the spectrum sharing NOMA/OMA systems. Throughout this section, we assume $d_n = 30$m, $d_f = 100$m, $d_p = 200$m, $\alpha = 2$, $\delta = 0.1$ and $N_n = N_f = N$. However, note that the analytical results presented in this paper are also valid for the case when $N_n \neq N_f$. 

Fig.~\ref{Fig_OutInt} shows a comparison between IntICSI and IntSCSI NOMA/OMA systems in terms of outage probability, for different values of $N$. It is clear from the figure that the spectrum sharing NOMA system significantly outperforms the corresponding OMA system for both IntICSI and IntSCSI cases. It is important to note that for large values of $I$, the difference between the outage probability of the NOMA system and the corresponding OMA system increases with an increase in the value of $N$. It is also noteworthy that for large values of $I$, the difference between the outage probability of the NOMA system for IntICSI and IntSCSI decreases with an increase in the value of $N$, indicating that the impact of information loss becomes less significant for larger values of $N$ and $I$. 

Fig.~\ref{Fig_OutPowIntCSI} shows the outage probability of the PowIntICSI system for both NOMA and OMA, with different values of $N$ and $P_{\mathrm{peak}}$. For both NOMA and OMA systems, the outage probability first decreases for small values of $I$ (which we refer to as the \emph{interference-constrained} regime) and then becomes saturated for large values of $I$ (which we refer to as the \emph{power-constrained} regime), as is evident from~\eqref{Sec_PowIntCSI_OutClosed_Asymptotic}. This occurs because the average power transmitted from the ST first increases with an increase in the value of $I$ and when the value of $I$ is large, the average power transmitted from the ST becomes constant, resulting in an outage floor. It is evident from the figure that the outage probability of NOMA system is significantly lower than that of the corresponding OMA system. More interestingly, for the NOMA/OMA system, the outage probability remains (almost) the same in the interference-constrained regime for a fixed value of $N$, regardless of the value of $P_{\mathrm{peak}}$, whereas the effect of $P_{\mathrm{peak}}$ becomes significantly evident in the power-constrained regime. 

\begin{figure*}[t]
\centering
\begin{minipage}{.32\textwidth}
  \centering
  \includegraphics[width = 0.9\linewidth, height = 4cm]{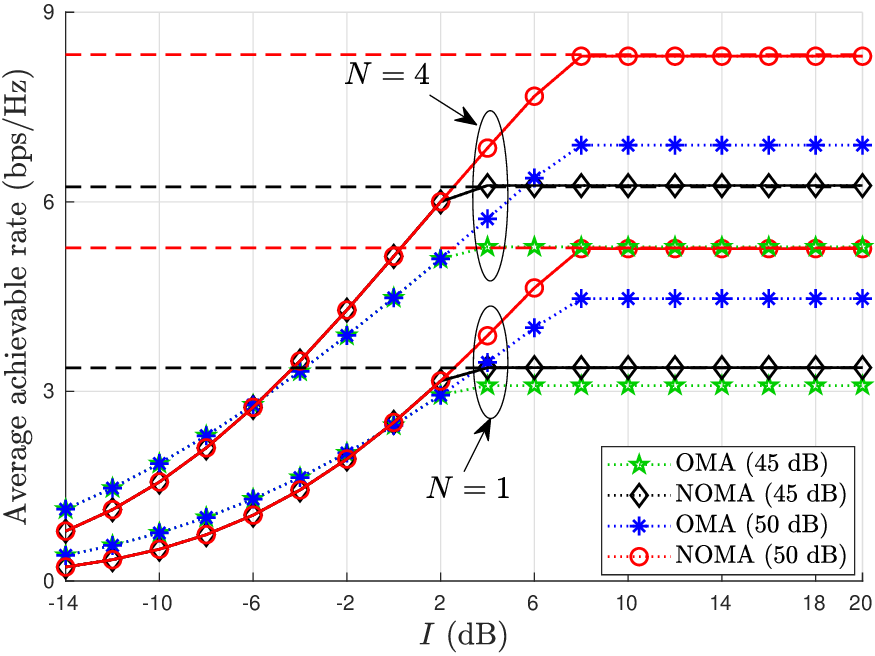}
  \caption{Average achievable sum-rate for the PowIntSCSI system. Markers with dotted lines represent the simulation results, the solid lines drawn using~\eqref{Sec_IntNoCSI_CapClosed} with $P_t^*$ given by~\eqref{Sec_PowIntNoCSI_PtOptimal} represent the analytical results, and the horizontal dotted lines drawn using~\eqref{Sec_IntNoCSI_CapClosed}, with $P_t^*$ replaced by $P_{\mathrm{peak}}$, denote the asymptotic results (for NOMA).}
  \label{Fig_CapPowIntNoCSI}
\end{minipage}
\hfill 
\begin{minipage}{.32\textwidth}
  \centering
  \includegraphics[width = 0.9\linewidth, height = 4cm]{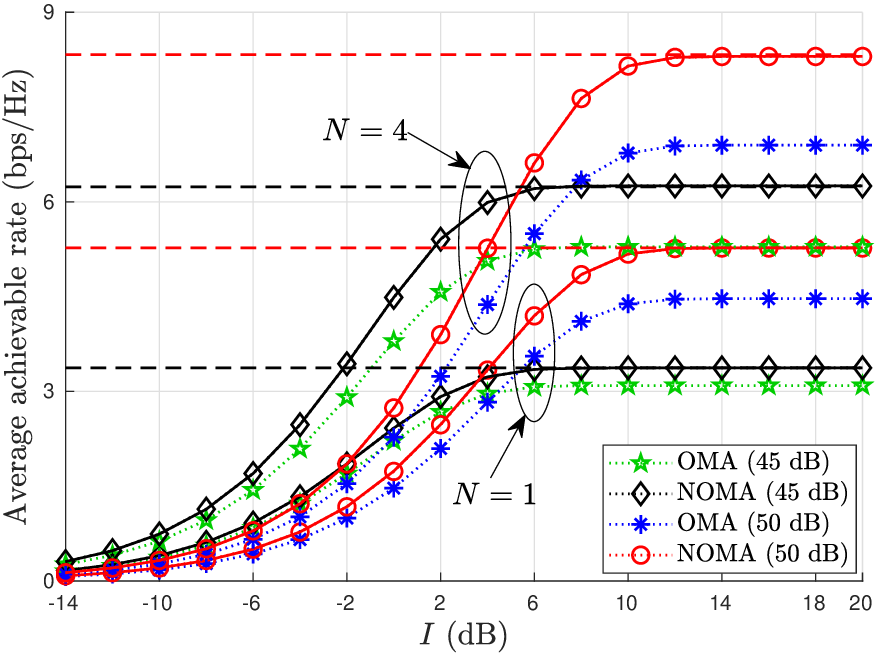}
  \caption{Average achievable sum-rate for the PowIntOneBit system. Markers with dotted lines represent the simulation results, the solid lines drawn using~\eqref{Sec_OneBit_CapClosed} denote the analytical results, and the horizontal dotted lines drawn using~\eqref{Sec_IntNoCSI_CapClosed}, with $P_t^*$ replaced by $P_{\mathrm{peak}}$, represent asymptotic results (for NOMA).}
  \label{Fig_CapOneBit}
\end{minipage}
\hfill
\begin{minipage}{.32\textwidth}
	\centering
	\includegraphics[width = 0.9\linewidth, height = 4cm]{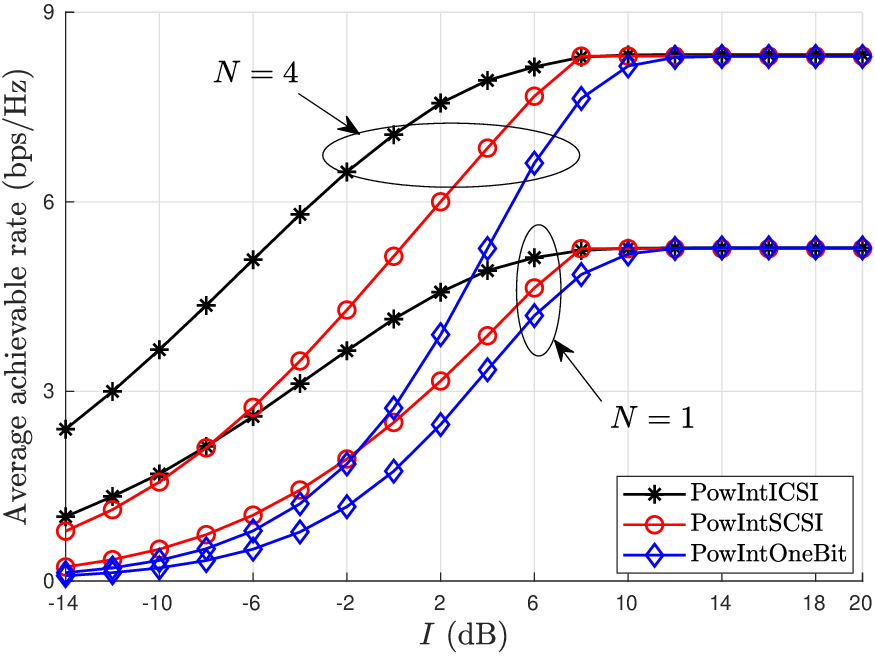}
\caption{Analytical plots for the average achievable sum-rate of the PowIntICSI, PowIntSCSI and PowIntOneBit NOMA systems. Here $P_{\mathrm{peak}}$ is fixed at 50 dB.}
\label{Fig_CapPow}
\end{minipage}
\end{figure*}

The outage probability of the PowIntSCSI NOMA/OMA system is shown in Fig.~\ref{Fig_OutPowIntNoCSI} for different values of $N$ and $P_{\mathrm{peak}}$. Similar to the case of Fig.~\ref{Fig_OutPowIntCSI}, the interference-constrained regime and power-constrained regime (see~\eqref{Sec_PowIntNoCSI_OutClosed_Asymptotic}) are clearly evident in Fig.~\ref{Fig_OutPowIntNoCSI}, with the NOMA system outperforming the corresponding OMA system. However, different from the case in Fig.~\ref{Fig_OutPowIntCSI}, the outage probability of NOMA/OMA for a fixed value of $N$ is \emph{exactly} the same in the interference-constrained regime, irrespective of the value of $P_{\mathrm{peak}}$. This occurs because the power transmitted by the ST is given by $P_t^* = \min\{P_{\mathrm{peak}}, -I/(\Omega_p \ln \delta)\}$; in the interference-constrained regime, this is equal to $- I / (\Omega_p \ln \delta)$, which is independent of $P_{\mathrm{peak}}$. 

Fig.~\ref{Fig_OutOneBit} depicts the outage probability performance of the PowIntOneBit NOMA/OMA systems for different values of $N$ and $P_{\mathrm{peak}}$. It is clearly evident from the figure that the NOMA system outperforms its OMA-based counterpart, by achieving a lower outage probability. However, different from the case in Figs.~\ref{Fig_OutPowIntCSI} and~\ref{Fig_OutPowIntNoCSI}, for a fixed value of $N$, the outage probability of NOMA/OMA systems with larger value of $P_{\mathrm{peak}}$ is higher in the interference-constrained regime. This occurs because when the value of $P_{\mathrm{peak}}$ is large (for a fixed $N$), the value of $\tau \triangleq I/P_{\mathrm{peak}}$ becomes small and therefore, the probability of receiving a feedback ``1'' at the ST becomes smaller (c.f.~\eqref{Sec_OneBit_OptimalPower}), which in turn leads to a higher probability of the ST being silent. Therefore, different from the other cases, having a higher peak power budget is not always beneficial in the interference-constrained regime for the case of the PowIntOneBit system. However, in the power-constrained regime (see~\eqref{Sec_OneBit_OutClosed_Asymptotic}), having a large $P_{\mathrm{peak}}$ is always advantageous, as is evident from the figure.


Fig.~\ref{Fig_OutPow} shows a comparison of the outage probability for PowIntICSI, PowIntSCSI and PowIntOneBit NOMA systems with $P_{\mathrm{peak}} = 50$ dB.  It is evident from the figure that in the power-constrained regime, the outage probability for all the three NOMA systems for a fixed value of $N$ converges to the same outage floor, in accordance with the asymptotic analysis of the outage probability of the corresponding systems. In the interference-constrained regime, the effect of information loss in terms of IL-CSI between PowIntICSI and PowIntSCSI systems is not very significant. However, in the case of PowIntOneBit system, the effect of information loss in terms of IL-CSI becomes significantly dominant in the interference-constrained regime, especially for large $N$.

Fig.~\ref{Fig_CapInt} shows a comparison of the average achievable sum-rate for IntICSI and IntSCSI NOMA and OMA systems. It is evident from the figure that the NOMA-based system outperforms the corresponding OMA-based system in terms of achievable sum-rate for large values of $N$. It is noteworthy that in contrast to the behavior in the case of outage probability, the performance degradation in IntSCSI system as compared to IntICSI system in terms of achievable rate due to information loss is significant even for large values of $N$. We have also plotted the asymptotic sum-rate for the NOMA system using~\eqref{Sec_IntCSI_CapClosed_Asymptotic} for IntICSI and~\eqref{Sec_IntNoCSI_CapClosed_Asymptotic} for IntSCSI, which shows a close agreement with the exact sum-rate for large $I$. 

In Fig.~\ref{Fig_CapPowIntCSI}, the average achievable sum-rate for the PowIntICSI NOMA/OMA system is shown for different values of $N$ and $P_{\mathrm{peak}}$. Interestingly, the difference between the sum-rate of the NOMA systems with $P_{\mathrm{peak}} = 45$ dB and $50$ dB is less significant in the interference-constrained regime, whereas the performance difference between the two systems becomes significant in the power-constrained regime. 

Fig.~\ref{Fig_CapPowIntNoCSI} shows the average achievable sum-rate for PowIntSCSI NOMA and OMA systems for different values of $N$ and $P_{\mathrm{peak}}$. It is noteworthy from the figure that for a fixed $N$ in the interference-constrained regime, the sum-rate for the NOMA/OMA system is \emph{exactly} the same for both $P_{\mathrm{peak}} = 45$ dB and $50$ dB, because of the same reason as explained previously for~Fig.~\ref{Fig_OutPowIntNoCSI}. Also, as explained for the previous figure, the achievable rate for both NOMA and OMA systems saturates in the power-constrained regime (see~\eqref{Sec_IntNoCSI_CapClosed}, with $P_t^*$ replaced by $P_{\mathrm{peak}}$) for NOMA, due to the fact that the power transmitted from the ST is constant and independent of the value of $I$.

Fig.~\ref{Fig_CapOneBit} depicts the average achievable sum-rate performance of the PowIntOneBit NOMA and OMA systems for different values of $N$ and $P_{\mathrm{peak}}$. For a fixed value of $N$ in the interference-constrained regime, the sum-rate of the NOMA/OMA system with larger $P_{\mathrm{peak}}$ achieves lower sum-rate as compared to the NOMA/OMA system with smaller $P_{\mathrm{peak}}$, because of the same reason as explained previously for~Fig.~\ref{Fig_OutOneBit}. Therefore, having a higher peak power budget in the PowIntOneBit NOMA/OMA system is not always beneficial in the interference-constrained regime. However, in the power-constrained regime, a larger value of $P_{\mathrm{peak}}$ results in a higher achievable sum-rate.  

Fig.~\ref{Fig_CapPow} shows the achievable sum-rate performance of PowIntICSI, PowIntSCSI and PowIntOneBit NOMA systems for the case when $P_{\mathrm{peak}} = 50$ dB. It can be noticed from the figure that in the interference-constrained regime, there is a significant performance degradation due to the information loss in terms of IL-CSI, whereas in the power-constrained regime, there is no effect of information loss in terms of IL-CSI. 
\section{Conclusion} \label{Sec_Conclusion}
In this paper, we presented the performance analysis of a multi-antenna-assisted NOMA-based underlay spectrum sharing system over Rayleigh fading channels. We derived closed-form expressions for the average achievable sum-rate and outage probability for the downlink NOMA system under a peak interference as well as a peak power budget constraint with different CSI availability at the ST regarding the link between ST and PR. Our results confirm that for a large number of antennas at the secondary users, the performance difference between the system with instantaneous IL-CSI and statistical IL-CSI in the interference-constrained regime is negligible in terms of outage probability, whereas this difference is significant in terms of achievable sum-rate. On the other hand when no IL-CSI is available at the ST, the NOMA and OMA systems both suffer from a significant performance degradation in the interference-constrained regime for a large number of antennas, in terms of outage probability as well as achievable sum-rate. However, in the power-constrained regime, the effect of information loss in IL-CSI is negligible for both outage probability and achievable sum-rate. We also derived closed-form expressions for the optimal power allocation to minimize the outage probability of NOMA systems for the special case when the secondary users are each equipped with a single antenna.
\appendices
\section{Proof of Theorem~\ref{Sec_IntCSI_TheoremCap}} \label{Sec_IntCSI_ProofCap}
Given that channel gains for all of the wireless links are exponential distributed, the PDF and CDF of $g_u (u \in \{n, f\})$, are respectively given by 
\begin{align}
	f_{g_u}(x) = \dfrac{x^{N_u - 1}}{\Gamma(N_u) \Omega_u^{N_u}} \exp \left( \dfrac{-x}{\Omega_u}\right), \label{f_gu} 
\end{align}
and 
\begin{align}
	F_{g_u}(x) = 1 - \exp \left( \dfrac{-x}{\Omega_u} \right) \sum_{k = 0}^{N_u - 1} \dfrac{1}{k!} \left( \dfrac{x}{\Omega_u} \right)^{k}. \label{F_gu}
\end{align}
Also, the PDF of $g_p$ is given by 
\begin{align}
	f_{g_p}(x) = \dfrac{1}{\Omega_p} \exp\left( \dfrac{-x}{\Omega_p}\right). \label{f_gp}
\end{align}
Now, the PDF of $g_{\min}$ can be obtained by
\begin{align}
	f_{g_{\min}}(x) = & \ f_{g_n}(x) [1 - F_{g_f}(x)] + f_{g_f}(x) [1 - F_{g_n}(x)] \notag \\
	= & \ \dfrac{1}{\Gamma(N_n) \Omega_n^{N_n}} \sum_{k = 0}^{N_f - 1} \dfrac{x^{N_n + k - 1}}{k! \Omega_f^k} \exp \left( \dfrac{-x}{\Omega} \right) \notag \\
	& + \dfrac{1}{\Gamma(N_f) \Omega_f^{N_f}} \sum_{l = 0}^{N_n - 1} \dfrac{x^{N_f + l - 1}}{l! \Omega_n^l} \exp \left( \dfrac{-x}{\Omega} \right). \label{f_gmin}
\end{align}
Also, the PDF of $X_u \triangleq g_u/g_p$ can be obtained as 
\begin{align}
	& f_{X_u}(x) = \int_0^\infty y f_{g_u} (yx) f_{g_p} (y) \mathrm dy \notag \\
	= & \ \dfrac{x^{N_u - 1}}{\Gamma(N_u) \Omega_u^{N_u} \Omega_p} \int_0^\infty y^{N_u} \exp \left[ -\left( \dfrac{x}{\Omega_u} + \dfrac{1}{\Omega_p} \right) y \right] \mathrm dy \notag \\
	= & \ \dfrac{N_u x^{N_u - 1}}{\Omega_u^{N_u} \Omega_p} \left( \dfrac{x}{\Omega_u} + \dfrac{1}{\Omega_p} \right)^{-(N_u + 1)}, \label{f_Xu}
\end{align}
where the integration above is solved using~\cite[eqn.~(3.351-3),~p.~340]{Grad}. Similarly, the PDF of $X_{\min} \triangleq g_{\min}/g_p$ can be given by 
\begin{align}
	& f_{X_{\min}}(x) = \int_0^\infty y f_{g_{\min}}(yx) f_{g_p}(y) \mathrm dy \notag \\
	= & \ \dfrac{1}{\Gamma(N_n) \Omega_n^{N_n} \Omega_p} \sum_{k = 0}^{N_f - 1} \dfrac{x^{N_n + k - 1}}{k! \Omega_f^k} \int_0^\infty y^{N_n + k}  \notag \\
	& \times \exp \left[ -\left( \dfrac{x}{\Omega} + \dfrac{1}{\Omega_p} \right) y \right] \mathrm dy + \dfrac{1}{\Gamma(N_f) \Omega_f^{N_f} \Omega_p} \notag \\
	& \times \sum_{l = 0}^{N_n - 1} \dfrac{x^{N_f + l - 1}}{l! \Omega_n^l} \int_0^\infty y^{N_f + l} \exp \left[ -\left( \dfrac{x}{\Omega} + \dfrac{1}{\Omega_p} \right) y \right] \mathrm dy \notag \\
	= & \sum_{k = 0}^{N_f - 1} \dfrac{x^{N_n + k - 1} \Gamma(N_n + k + 1)}{\Gamma(N_n) k! \Omega_n^{N_n} \Omega_f^k \Omega_p} \left( \dfrac{x}{\Omega} + \dfrac{1}{\Omega_p} \right)^{-(N_n + k + 1)} \notag \\
	& + \sum_{l = 0}^{N_n- 1} \dfrac{x^{N_f + l - 1} \Gamma(N_f + l + 1)}{\Gamma(N_f) l! \Omega_f^{N_f} \Omega_n^l \Omega_p} \left( \dfrac{x}{\Omega} + \dfrac{1}{\Omega_p} \right)^{-(N_f + l + 1)}. \label{f_Xmin}
\end{align}
The integration above is solved using~\cite[eqn.~(3.351-3),~p.~340]{Grad}. Using~\eqref{f_Xu}, an analytical expression for the first expectation in~\eqref{Sec_IntCSI_Csum_Def} is be given by  
\begin{align}
	& \mathbb E_{X_n} \{ \log_2 (1 + a_n I X_n)\} = \dfrac{1}{\ln 2} \int_0^\infty \ln(1 + a_n I x) f_{X_n}(x) \mathrm dx \notag \\
	 = & \dfrac{N_n \Omega_p^{N_n}}{\Omega_n^{N_n} \ln 2} \int_0^\infty x^{N_n - 1} \ln(1 + a_n I x) \left( 1 + \dfrac{\Omega_p}{\Omega_n} x \right)^{-(N_n + 1)} \mathrm dx \notag \\
	 = & \dfrac{N_n \Omega_p^{N_n}}{\Omega_n^{N_n} \Gamma(N_n + 1)\ln 2} \int_0^\infty x^{N_n - 1} G_{2, 2}^{1, 2} \left( a_n I x \left \vert \begin{smallmatrix} 1, & 1 \\[0.6em] 1, & 0\end{smallmatrix} \right. \right)  \notag \\
	 & \hspace{4.5cm}\times G_{1, 1}^{1, 1} \left( \dfrac{\Omega_p}{\Omega_n} x \left \vert \begin{smallmatrix} -N_n \\[0.6em] 0\end{smallmatrix} \right. \right) \mathrm dx \notag \\
	 = & \ \dfrac{1}{\Gamma(N_n) \ln 2} \left( \dfrac{\Omega_p}{\Omega_n a_n I} \right)^{N_n}  \!\! G_{3, 3}^{3, 2} \left( \dfrac{\Omega_p}{\Omega_n a_n I} \left \vert \begin{smallmatrix} -N_n, & - N_n, & 1 - N_n \\[0.6em] 0, & - N_n, & - N_n\end{smallmatrix} \right. \right), \label{Theorem1-FirstIntegralClosed}
\end{align}
where the integration above is solved using~\cite[eqns.~(7),~(11),~(21), and~(22)]{Reduce}. Similarly, using~\eqref{f_Xmin}, an analytical expression for the second expectation in~\eqref{Sec_IntCSI_Csum_Def} is given by 
\begin{align}
	& \mathbb E_{X_{\min}}\{\log_2 (1 + I X_{\min})\} \notag \\
	= & \ \dfrac{1}{\ln 2} \left[ \sum_{k = 0}^{N_f - 1} \dfrac{\Gamma(N_n + k + 1)}{\Gamma(N_n) k! \Omega_n^{N_n} \Omega_f^k \Omega_p} \int_0^\infty x^{N_n + k - 1} \ln (1 + Ix) \right. \notag \\
	& \times \left( \dfrac{x}{\Omega} + \dfrac{1}{\Omega_p} \right)^{-(N_n + k + 1)} \mathrm dx + \sum_{l = 0}^{N_n - 1} \dfrac{\Gamma(N_f + l + 1)}{\Gamma(N_f) l! \Omega_f^{N_f} \Omega_n^{l} \Omega_p} \notag \\
	& \left. \times \int_0^\infty x^{N_f + l - 1} \ln(1 + Ix) \left( \dfrac{x}{\Omega} + \dfrac{1}{\Omega_p}\right)^{-(N_f + l + 1)} \mathrm dx \right] \notag  \\
	= & \dfrac{1}{\ln 2} \left[ \sum_{k = 0}^{N_f - 1} \dfrac{\Omega_p^{N_n + k}}{\Gamma(N_n) k! \Omega_n^{N_n} \Omega_f^k} \int_0^\infty x^{N_n + k - 1} G_{2, 2}^{1, 2} \left( I x \left \vert \begin{smallmatrix} 1, & 1 \\[0.6em] 1, & 0\end{smallmatrix} \right. \right) \right. \notag \\
	& \times G_{1, 1}^{1, 1} \left( \dfrac{\Omega_p}{\Omega} x \left \vert \begin{smallmatrix} - N_n - k\\[0.6em] 0\end{smallmatrix} \right. \right) \mathrm dx + \sum_{l = 0}^{N_n - 1} \dfrac{\Omega_p^{N_f + l}}{\Gamma(N_f) l! \Omega_f^{N_f} \Omega_n^l}   \notag \\
	& \left. \times \int_0^\infty x^{N_f + l - 1} G_{2, 2}^{1, 2} \left( I x \left \vert \begin{smallmatrix} 1, & 1 \\[0.6em] 1, & 0\end{smallmatrix} \right. \right) G_{1, 1}^{1, 1} \left( \dfrac{\Omega_p}{\Omega} x \left \vert \begin{smallmatrix} - N_f - l\\[0.6em] 0\end{smallmatrix} \right. \right) \mathrm dx \right] \notag \\
	= & \ \dfrac{1}{\ln 2} \left[ \sum_{k = 0}^{N_f - 1} \dfrac{\Omega_p^{N_n + k} G_{3, 3}^{3, 2} \left( \dfrac{\Omega_p}{\Omega I} \left \vert \begin{smallmatrix} - N_n - k, & - N_n - k, & 1 - N_n - k \\[0.6em] 0, & - N_n - k, & - N_n - k\end{smallmatrix} \right. \right)}{\Gamma(N_n) k! \Omega_n^{N_n} \Omega_f^k I^{N_n + k}} \right. \notag \\
	& \left. + \sum_{l = 0}^{N_n - 1} \dfrac{\Omega_p^{N_f + l} G_{3, 3}^{3, 2} \left( \dfrac{\Omega_p}{\Omega I} \left \vert \begin{smallmatrix} - N_f - l, & - N_f - l, & 1 - N_f - l \\[0.6em] 0, & - N_f - l, & - N_f - l\end{smallmatrix} \right. \right)}{\Gamma(N_f) l! \Omega_f^{N_f} \Omega_n^l I^{N_f + l}}  \right]. \label{Theorem1-SecondIntegralClosed}
\end{align}
The integral above is solved in a similar fashion as in~\eqref{Theorem1-FirstIntegralClosed}. An analytical expression for the third expectation in~\eqref{Sec_IntCSI_Csum_Def} can be obtained by replacing $I$ with $a_n I$ in~\eqref{Theorem1-SecondIntegralClosed}. Therefore, using~\eqref{Theorem1-FirstIntegralClosed} and~\eqref{Theorem1-SecondIntegralClosed}, an analytical expression for~\eqref{Sec_IntCSI_Csum_Def} is given by~\eqref{Sec_IntCSI_CapClosed}; this concludes the proof.
\section{Proof of Theorem~\ref{Sec_IntCSI_TheoremOut}} \label{Sec_IntCSI_ProofOut}
We first define the \emph{non-outage} event for NOMA as the event where $x_f$ and $x_n$ are decoded successfully at $U_n$, and $x_f$ is decoded successfully at $U_f$. Therefore, the outage probability for the NOMA system is given by 
\begin{align}
	& \mathbb P_{\mathrm{out}} = 1 - \Pr\left( \gamma_n^{(f)} \geq \theta, \gamma_n^{(n)} \geq \theta\right) \Pr \left( \gamma_f^{(f)} \geq \theta \right) \notag \\
	= & \ 1 - \Pr \left( \dfrac{a_f I g_n/g_p}{1 + a_n I g_n/g_p} \geq \theta, \dfrac{a_n I g_n}{g_p} \geq \theta \right) \notag \\ 
	& \hspace{4cm}\times \Pr \left(\dfrac{a_f I g_f/g_p}{1 + a_n I g_f/g_p} \geq \theta \right)\notag \\
	= & \ 1 - \Pr \left( \dfrac{a_f I X_n}{1 + a_n I X_n} \geq \theta, \ a_n I X_n \geq \theta \right) \notag \\
	& \hspace{4cm}\times \Pr \left(\dfrac{a_f I X_f}{1 + a_n I X_f} \geq \theta \right) \notag \\
	= & \ 1 - \Pr \left( X_n \geq \dfrac{\theta}{I} \max \left\{ \dfrac{1}{a_f - a_n \theta}, \dfrac{1}{a_n}\right\} \right) \notag \\
	& \hspace{4cm}\times \Pr \left( X_f \geq \dfrac{\theta}{I (a_f - a_n \theta)}\right). \notag
\end{align}
Using the relations $\xi_n = \theta \max \left\{ \tfrac{1}{a_f - a_n \theta}, \tfrac{1}{a_n}\right\}$	and $\xi_f = \tfrac{\theta}{a_f - a_n \theta}$, the expression for $\mathbb P_{\mathrm{out}}$ can be written as
\begin{align}	
	\mathbb P_{\mathrm{out}} = & \ 1 - \Pr\left(X_n \geq \dfrac{\xi_n}{I}\right) \Pr\left(X_f \geq \dfrac{\xi_f}{I}\right) \notag \\
	= & \ 1 - \prod_{u \in \{n, f\}}\mathcal F_{X_u} \left(\dfrac{\xi_u}{I}\right), \label{Sec_IntCSI_OutIntermediate}
\end{align}
where $\mathcal F_{X_u}(\cdot)$ is evaluated as 
\begin{align}
	\mathcal F_{X_u}\left(\dfrac{\xi_u}{I}\right) = & \ \int_{\xi_u/I}^{\infty} f_{X_u}(x) \mathrm dx \notag \\
	= & \ \dfrac{N_u \Omega_u}{\Omega_p} \int_{\xi_u/I}^\infty  x^{N_u - 1} \left( x + \dfrac{\Omega_u}{\Omega_p} \right)^{-(N_u + 1)} \mathrm dx \notag \\
	 = & \ 1- \left( \dfrac{\Omega_p \xi_u}{\Omega_u I + \Omega_p \xi_u}\right)^{N_u}. \label{mathcalFu}
\end{align}
Substituting the expression for $\mathcal F_{X_u}(\xi_u/I)$ from~\eqref{mathcalFu} into~\eqref{Sec_IntCSI_OutIntermediate}, the closed-form expression for $\mathbb P_{\mathrm{out}}$ becomes equal to~\eqref{Sec_IntCSI_OutClosed}; this concludes the proof.
\section{Proof of Theorem~\ref{Sec_IntCSI_TheoremOpt}} \label{Sec_IntCSI_ProofOpt}
For the case where $N_n = N_f = 1$, the outage probability is given by 
\begin{align}
	\mathbb P_{\mathrm{out}} = 1 - \prod_{u \in \{n, f\}} \dfrac{\Omega_u}{\Omega_u + \Omega_p (\xi_u/I)}. \notag
\end{align}
Assuming, $1/(a_f - a_n \theta) > 1/a_n$, i.e., $\xi_n = 1/(1- a_n - a_n \theta)$, we have 
\begin{align*}
	\dfrac{\partial \mathbb P_{\mathrm{out}}}{\partial a_n} = & \ \dfrac{I^2 \Omega_p \Omega_n \Omega_f \theta (1 + \theta) (a_n + a_n \theta - 1)}{\{\Omega_p \theta - I \Omega_n (a_n + a_n \theta - 1)\}^2} \\
	& \times \dfrac{\{I (\Omega_n + \Omega_f) (a_n + a_n \theta - 1) - 2 \Omega_p \theta\}}{\{\Omega_p \theta - I \Omega_f (a_n + a_n \theta - 1)\}^2}.
\end{align*}
Using the fact that $a_n < 1/(1 + \theta)$ (see~Section~\ref{Sec_IntCSI_Out}), we have 
\begin{align*}
	& \dfrac{\partial \mathbb P_{\mathrm{out}}}{\partial a_n} = 0 \\
	\implies & \ I (\Omega_f + \Omega_n) \left\{ (1 + \theta) a_n - 1\right\}^2 \\
	& \hspace{1.5cm}- 2\theta \Omega_p \left\{ (1 + \theta) a_n - 1\right\} = 0.
\end{align*}
The preceding equation is quadratic, leading to the following two solutions:
\begin{align*}
	a_n = \dfrac{1}{1 + \theta}, \quad \dfrac{1}{1 + \theta} + \dfrac{2 \Omega_p \theta}{I (\Omega_n + \Omega_f) (1 + \theta)}.
\end{align*}
Since, $a_n < 1/(1 + \theta)$, neither of the above optimal values is feasible. Now assuming that $1/(a_f - a_n \theta) < 1/a_n$, i.e., $\xi_n = 1/a_n$, we have 
\begin{align*}
	\dfrac{\partial \mathbb P_{\mathrm{out}}}{\partial a_n} = & \ \dfrac{I^2 \Omega_p \Omega_n \Omega_f \theta}{(a_n I \Omega_n + \Omega_p \theta)^2 \{\Omega_p \theta - I \Omega_f (a_n + a_n \theta - 1)\}^2} \\
	& \times \left\{ -I \Omega_f -\Omega_p \theta + 2 a_n (1 + \theta) (I \Omega_f + \Omega_p \theta)  \right. \\
	& \hspace{2cm}\left. - a_n^2 I (1 + \theta) (-\Omega_n + \Omega_f + \Omega_f \theta)\right\}
\end{align*}
Since $a_n < 1/(1 + \theta)$, we have 
\begin{align*}
	& \dfrac{\partial \mathbb P_{\mathrm{out}}}{\partial a_n} = 0 \\
	\implies & \ I^2 \Omega_p \Omega_n \Omega_f \theta \left\{ -I \Omega_f -\Omega_p \theta + 2 a_n (1 + \theta) (I \Omega_f + \Omega_p \theta) \right. \\
	& \hspace{2cm}\left. - a_n^2 I (1 + \theta) (-\Omega_n + \Omega_f + \Omega_f \theta)\right\} = 0 \\
	\implies & \ a_n^2 I (1 + \theta) (-\Omega_n + \Omega_f + \Omega_f \theta) \\
	& \hspace{1.3cm}- 2 a_n (1 + \theta) (I \Omega_f + \Omega_p \theta) + I \Omega_f +\Omega_p \theta = 0 \\
	\implies & a_n = \ \dfrac{I \Omega_f + \Omega_p \theta}{I \{(1 + \theta) \Omega_f - \Omega_n\}} \notag \\
		& \pm \dfrac{\sqrt{(1 + \theta) (I \Omega_f + \Omega_p \theta) \{I \Omega_n + \Omega_p \theta (1 + \theta)\}}}{I (1 + \theta) \{(1 + \theta) \Omega_f - \Omega_n\}}.
\end{align*}
Define
\begin{align*}
	a_n^{(1)} \triangleq & \ \dfrac{I \Omega_f + \Omega_p \theta}{I \{(1 + \theta) \Omega_f - \Omega_n\}} \notag \\
		& + \dfrac{\sqrt{(1 + \theta) (I \Omega_f + \Omega_p \theta) \{I \Omega_n + \Omega_p \theta (1 + \theta)\}}}{I (1 + \theta) \{(1 + \theta) \Omega_f - \Omega_n\}} \\
		= & \dfrac{1}{(1 + \theta) - (\Omega_n/\Omega_f)} + \dfrac{\Omega_p \theta / (I \Omega_f)}{(1 + \theta) - (\Omega_n/\Omega_f)} \\
		& + \dfrac{\sqrt{\tfrac{(I \Omega_f + \Omega_p \theta)}{I^2 (1 + \theta) \Omega_f^2}\{I \Omega_n + \Omega_p \theta (1 + \theta)\}}}{(1 + \theta) - (\Omega_n/\Omega_f)} \\
		\triangleq & \mathscr X_1 + \mathscr X_2 + \mathscr X_3.
\end{align*}
For the case $1 + \theta > \Omega_n / \Omega_f$, $\mathscr X_1, \mathscr X_2, \mathscr X_3 > 0$ and $\mathscr X_1 > 1/(1 + \theta)$. Therefore, $a_n^{(1)} > 1/(1 + \theta)$. On the other hand, if $1 + \theta < \Omega_n / \Omega_f$, $\mathscr X_1, \mathscr X_2, \mathscr X_3 < 0$ and $a_n^{(1)} < 0$. Therefore, the only feasible solution for the optimal value of $a_n$ is 
\begin{align*}
	a_n^* = a_n^{(2)} = & \ \dfrac{I \Omega_f + \Omega_p \theta}{I \{(1 + \theta) \Omega_f - \Omega_n\}} \notag \\
		& - \dfrac{\sqrt{(1 + \theta) (I \Omega_f + \Omega_p \theta) \{I \Omega_n + \Omega_p \theta (1 + \theta)\}}}{I (1 + \theta) \{(1 + \theta) \Omega_f - \Omega_n\}}.
\end{align*}
This concludes the proof.
\section{Proof of Theorem~\ref{Sec_IntNoCSI_TheoremCap}} \label{Sec_IntNoCSI_ProofCap}
Using~\eqref{f_gu}, an analytical expression for the first expectation in~\eqref{Sec_IntNoCSI_Cap_Def} can be given by 
\begin{align}
	& \mathbb E_{g_n} \left\{ \log_2 (1 + a_n g_n P_t^*)\right\} \notag \\
	= & \ \dfrac{1}{\Gamma(N_n) \Omega_n^{N_n} \ln 2} \int_0^\infty x^{N_n - 1} \ln(1 + a_n P_t^* x) \exp \left( \dfrac{-x}{\Omega_n}\right) \mathrm dx \notag \\ 
	= & \dfrac{1}{\Gamma(N_n) \Omega_n^{N_n} \ln 2} \int_0^\infty x^{N_n - 1} G_{2, 2}^{1, 2} \left( a_n P_t^* x \left\vert \begin{smallmatrix} 1, & 1 \\ 1, & 0\end{smallmatrix} \right. \right) \notag \\
	& \hspace{5.3cm} \times G_{0, 1}^{1, 0} \left( \dfrac{x}{\Omega_n} \left\vert \begin{smallmatrix} - \\ 0\end{smallmatrix} \right. \right) \mathrm dx \notag \\
	= & \ \dfrac{G_{2, 3}^{3, 1} \left( \dfrac{1}{\Omega_n a_n P_t^*} \left\vert \begin{smallmatrix} -N_n, & 1 - N_n \\[0.6em] 0, & - N_n, & - N_n \end{smallmatrix} \right. \right)}{\Gamma(N_n) \Omega_n^{N_n} (a_n P_t^*)^{N_n} \ln 2}. \label{Theorem3-FirstIntegralClosed}
\end{align}
The integral above is solved using~\cite[eqns.~(7),~(11),~(21), and~(22)]{Reduce}. Using~\eqref{f_gmin}, an analytical expression for the second expectation in~\eqref{Sec_IntNoCSI_Cap_Def} can be given by 
\begin{align}
	& \mathbb E_{g_{\min}} \left\{ \log_2 (1 + g_{\min} P_t^*) \right\} \notag \\
	= & \ \dfrac{1}{\Gamma(N_n) \Omega_n^{N_n} \ln 2} \sum_{k = 0}^{N_f - 1} \dfrac{1}{k! \Omega_f^k} \int_0^\infty x^{N_n + k - 1} \ln (1 + P_t^* x) \notag \\
	& \times \exp \left( \dfrac{-x}{\Omega} \right)  \mathrm dx + \dfrac{1}{\Gamma(N_f) \Omega_f^{N_f} \ln 2} \sum_{l = 0}^{N_n - 1} \dfrac{1}{l! \Omega_n^l} \notag \\
	& \hspace{2cm}\times \int_0^\infty x^{N_f + l - 1} \ln(1 + P_t^* x) \exp \left( \dfrac{-x}{ \Omega} \right) \mathrm dx \notag \\
	= & \ \dfrac{1}{\Gamma(N_n) \Omega_n^{N_n} \ln 2} \sum_{k = 0}^{N_f - 1} \dfrac{G_{2, 3}^{3, 1} \left( \dfrac{1}{\Omega  P_t^*} \left\vert \begin{smallmatrix} -N_n-k, 1 - N_n -k \\[0.6em] 0, - N_n -k,  - N_n -k \end{smallmatrix} \right. \right)}{k! \Omega_f^k (P_t^*)^{N_n + k}} \notag \\
	+ & \dfrac{1}{\Gamma(N_f) \Omega_f^{N_f} \ln 2} \times \sum_{l = 0}^{N_n - 1} \dfrac{G_{2, 3}^{3, 1} \left( \dfrac{1}{\Omega  P_t^*} \left\vert \begin{smallmatrix} -N_f-l, 1 - N_f -l \\[0.6em] 0, - N_f -l, - N_f - l \end{smallmatrix} \right. \right)}{l! \Omega_n^l (P_t^*)^{N_f + l}}. \label{Theorem3-SecondIntegralClosed} 
\end{align}
The integrals above are solved using~\cite[eqns.~(7),~(11),~(21), and~(22)]{Reduce}. An analytical expression for the third expectation in~\eqref{Sec_IntNoCSI_Cap_Def} can be obtained by replacing $P_t^*$ by $a_n P_t^*$ in~\eqref{Theorem3-SecondIntegralClosed}. Therefore, using~\eqref{Theorem3-FirstIntegralClosed} and~\eqref{Theorem3-SecondIntegralClosed}, an analytical expression for~\eqref{Sec_IntNoCSI_Cap_Def} is given by~\eqref{Sec_IntNoCSI_CapClosed}; this completes the proof.
\section{Proof of Theorem~\ref{Sec_IntNoCSI_TheoremOpt}} \label{Sec_IntNoCSI_ProofOpt}
Using the relation $\Gamma[1, x] = \exp(-x)$, the expression for the outage probability for the case when $N_n = N_f = 1$ is given by 
\begin{align*}
	\mathbb P_{\mathrm{out}} = 1 - \exp \left\{ - \left( \dfrac{\xi_n}{P_t^* \Omega_n} + \dfrac{\xi_f}{P_t^* \Omega_f}\right) \right\}.
\end{align*}
Assuming $1/(a_f - a_n \theta) > 1/a_n$, i.e., $\xi_n = 1/(1 - a_n - a_n \theta)$, we have 
\begin{align*}
	\dfrac{\partial \mathbb P_{\mathrm{out}}}{\partial a_n} = & \ \exp \left\{ \dfrac{-\theta (\Omega_n + \Omega_f)}{P_t^* (1 - a_n - a_n \theta) \Omega_n \Omega_f} \right\} \\
	& \hspace{2cm}\times \dfrac{(\Omega_n + \Omega_f) (1 + \theta) \theta}{\Omega_n \Omega_f P_t^* (-1 + a_n + a_n \theta)^2}.
\end{align*}
Since $a_n < 1/ (1 + \theta)$, this implies that 
\begin{align*}
	\exp \left\{ \dfrac{-\theta (\Omega_n + \Omega_f)}{P_t^* (1 - a_n - a_n \theta) \Omega_n \Omega_f} \right\}  \neq 0. 
\end{align*}
Therefore, 
\begin{align*}
	& \dfrac{\partial \mathbb P_{\mathrm{out}}}{\partial a_n} = 0 \\
	\implies & \dfrac{(\Omega_n + \Omega_f) (1 + \theta) \theta}{\Omega_n \Omega_f P_t^* (-1 + a_n + a_n \theta)^2} = 0.
\end{align*}
It can easily be noticed that the only feasible solution for the above equation is $a_n = \pm \infty$. 

On the other hand, when $1/(a_f - a_n \theta) < 1/a_n$, i.e., $\xi_n = 1/a_n$, we have 
\begin{align*}
	\dfrac{\partial \mathbb P_{\mathrm{out}}}{\partial a_n} = & - \exp \left\{ - \left( \dfrac{\theta}{\Omega_n P_t^* a_n} + \dfrac{\theta}{\Omega_f P_t^* (1 - a_n - a_n \theta)}\right)\right\} \\
	& \times \left\{\dfrac{\theta}{\Omega_n P_t^* a_n^2} - \dfrac{(1 + \theta) \theta}{\Omega_f P_t^* (1 - a_n - a_n \theta)^2}\right\}.
\end{align*}
Since $a_n < 1/(1 + \theta)$, this implies that 
\begin{align*}
	\exp \left\{ - \left( \dfrac{\theta}{\Omega_n P_t^* a_n} + \dfrac{\theta}{\Omega_f P_t^* (1 - a_n - a_n \theta)}\right)\right\} \neq 0. 
\end{align*}
Therefore, using the constraint $0 < a_n < 1/(1 + \theta)$, we have 
\begin{align*}
	& \dfrac{\partial \mathbb P_{\mathrm{out}}}{\partial a_n} = 0 	\implies a_n^2 (1 + \theta) \{(1 + \theta) \Omega_f -\Omega_n\} \\
	&  \hspace{3cm}- 2 a_n \Omega_f (1 + \theta) +  \Omega_f = 0 \\
	\implies &  a_n = \dfrac{\Omega_f}{(1 + \theta) \Omega_f - \Omega_n} \pm \dfrac{\sqrt{\Omega_n \Omega_f (1 + \theta)}}{(1 + \theta) \left\{(1 + \theta) \Omega_f -\Omega_n \right\}}.
\end{align*}
Define
\begin{align*}
	a_n^{(1)} = & \ \dfrac{\Omega_f}{(1 + \theta) \Omega_f - \Omega_n} + \dfrac{\sqrt{\Omega_n \Omega_f (1 + \theta)}}{(1 + \theta) \left\{(1 + \theta) \Omega_f -\Omega_n \right\}} \\
	= & \ \dfrac{1}{(1 + \theta) - (\Omega_n/\Omega_f)} + \dfrac{\sqrt{\tfrac{\Omega_n}{(1 + \theta) \Omega_f}}}{(1 + \theta) - (\Omega_n/\Omega_f)} \\
	\triangleq & \ \mathscr X_1 + \mathscr X_2. 
\end{align*}
For the case when $1 + \theta > \Omega_n / \Omega_f$, $\mathscr X_1, \mathscr X_2 > 0$ and $\mathscr X_1 > 1/ (1 + \theta)$, leading to $a_n^{(1)} > 1/(1 + \theta)$. On the other hand, if $1 + \theta < \Omega_n / \Omega_f$, $\mathscr X_1 , \mathscr X_2 < 0$, leading to $a_n^{(1)} < 0$. Therefore, the only feasible solution is given by 
\begin{align*}
	a_n^* = a_n^{(2)} = \dfrac{\Omega_f}{(1 + \theta) \Omega_f - \Omega_n} - \dfrac{\sqrt{\Omega_n \Omega_f (1 + \theta)}}{(1 + \theta) \left\{(1 + \theta) \Omega_f -\Omega_n \right\}}.
\end{align*}
This completes the proof.
\section{Proof of Theorem~\ref{Sec_PowIntCSI_TheoremCap}} \label{Sec_PowIntCSI_ProofCap}
Given that $N_n = N_f = 1$, we have $g_n = |h_{n, 1}|^2$ and $g_f = |h_{f, 1}|^2$. The expressions for the PDFs of $g_u$ and $g_{\min}$ are respectively given by 
\begin{align*}
	f_{g_u}(x) = \dfrac{1}{\Omega_u} \exp \left( \dfrac{-x}{\Omega_u}\right), \qquad f_{g_{\min}}(x) = \dfrac{1}{\Omega} \exp \left( \dfrac{-x}{\Omega}\right). 
\end{align*}
Solving the first expectation in~\eqref{Sec_PowIntCSI_Cap_Def2}, we have
\begin{small}
\begin{align}
	& \mathbb E_{g_p, g_n} \left\{ \log_2 [1 + a_n g_n P_t^*(g_p)]\right\} \notag \\
	= & \ \int_{y = 0}^{y = \tfrac{I}{P_{\mathrm{peak}}}}\int_{x = 0}^{x = \infty} \log_2 (1 + a_n P_{\mathrm{peak}} x) f_{g_n}(x) f_{g_p}(y)\mathrm dx \ \mathrm dy \notag \\
& + \int_{y = \tfrac{I}{P_{\mathrm{peak}}}}^{y = \infty} \int_{x = 0}^{x = \infty} \log_2 \left( 1 + a_n I \dfrac{x}{y}\right) f_{g_n}(x) f_{g_p}(y)\mathrm dx \ \mathrm dy  \notag \\
= & \dfrac{1}{\Omega_n \Omega_p} \int_{y = 0}^{y = \tfrac{I}{P_{\mathrm{peak}}}}\int_{x = 0}^{x = \infty} \log_2 (1 + a_n P_{\mathrm{peak}} x) \exp\left( \dfrac{-x}{\Omega_n} \right) \notag \\
& \hspace{5cm}\times \exp\left( \dfrac{-y}{\Omega_p} \right) \mathrm dx \ \mathrm dy \notag \\
& + \dfrac{1}{\Omega_n \Omega_p} \int_{y = \tfrac{I}{P_{\mathrm{peak}}}}^{y =\infty} \int_{x = 0}^{x = \infty} \log_2 \left( 1 + a_n I \dfrac{x}{y}\right) \exp\left( \dfrac{-x}{\Omega_n} \right) \notag \\
& \hspace{5cm}\times \exp\left( \dfrac{-y}{\Omega_p} \right) \mathrm dx \ \mathrm dy \notag \\
= & \dfrac{1}{\Omega_n \Omega_p \ln 2} \left[ \int_{y = 0}^{y = \tfrac{I}{P_{\mathrm{peak}}}} \exp \left( \dfrac{-y}{\Omega_p}\right) \mathrm dy \right] \notag \\ 
& \hspace{1.5cm}\times \left[ \int_{x = 0}^{x = \infty} \ln(1 + a_n P_{\mathrm{peak}} x) \exp \left( \dfrac{-x}{\Omega_n}\right) \mathrm dx \right] \notag \\
& + \dfrac{1}{\Omega_n \Omega_p \ln 2} \int_{y = \tfrac{I}{P_{\mathrm{peak}}}}^{y =\infty} \int_{x = 0}^{x = \infty} \ln \left( 1 + a_n I \dfrac{x}{y}\right) \exp\left( \dfrac{-x}{\Omega_n} \right) \notag \\
& \hspace{5cm}\times \exp\left( \dfrac{-y}{\Omega_p} \right) \mathrm dx \ \mathrm dy \notag \\
= & \ \dfrac{-1}{\ln 2} \left[ 1 - \exp \left( \dfrac{-I}{\Omega_p P_{\mathrm{peak}}}\right)\right] \exp \left( \dfrac{1}{a_n \Omega_n P_{\mathrm{peak}}}\right) \notag \\
& \times \operatorname{Ei} \left( \dfrac{-1}{a_n \Omega_n P_{\mathrm{peak}}} \right) \!+\! \dfrac{1}{\ln 2} \left[ \operatorname{Ei} \left( \!\dfrac{-I}{\Omega_p P_{\mathrm{peak}}}\!\right) -\dfrac{\Omega_p}{\Omega_p - a_n \Omega_n I} \right. \notag \\
&  \times \! \left\{\operatorname{Ei} \!\left(\! \dfrac{-I}{\Omega_p P_{\mathrm{peak}}}\! \right)\! - \! \exp \left(\! \dfrac{\Omega_p - a_n \Omega_n I}{a_n \Omega_n \Omega_p P_{\mathrm{peak}}}\!\right) \operatorname{Ei} \! \left( \!\dfrac{-1}{a_n \Omega_n P_{\mathrm{peak}}}\!\right)\!\right\}  \notag \\
&  + \exp \left( \dfrac{\Omega_p - a_n \Omega_n I}{a_n \Omega_n \Omega_p P_{\mathrm{peak}}}\right) \notag \\
& \hspace{1cm}\left. \times \left\{\operatorname{Shi}\left( \dfrac{1}{a_n \Omega_n P_{\mathrm{peak}}}\right) - \operatorname{Chi}\left( \dfrac{1}{a_n \Omega_n P_{\mathrm{peak}}}\right) 	\right\} \right] \notag \\
\triangleq & \ T(a_n \Omega_n). \label{Sec_PowIntCSI_T1}
\end{align}
\end{small}
Similarly, the analytical expression for the second expectation in~\eqref{Sec_PowIntCSI_Cap_Def2} can be obtained by replacing $a_n \Omega_n$ by $\Omega$, and the analytical expression for the third expectation in~\eqref{Sec_PowIntCSI_Cap_Def2} can be obtained by replacing $\Omega_n$ by $\Omega$ in the preceding equation; this completes the proof.
\section{Proof of Theorem~\ref{Sec_PowIntCSI_TheoremOut}} \label{Sec_PowIntCSI_ProofOut}
Using~\eqref{Sec_PowIntCSI_PtOptimal} and~\eqref{Sec_PowIntCSI_Out_Def}, it follows that 
\begin{small}
\begin{align}
	& \mathbb P_{\mathrm{out}} = 1 - \Pr \left( g_n \geq \dfrac{\xi_n}{P_t^*(g_p)}\right) \Pr \left( g_f \geq \dfrac{\xi_f}{P_t^*(g_p)}\right)  \notag \\
	\!= & 1 - \left\{ \! \underbrace{\Pr \left(\! g_n \!\geq\! \dfrac{\xi_n}{P_{\mathrm{peak}}} \! \right) \Pr \left(\! g_f \!\geq\! \dfrac{\xi_f}{P_{\mathrm{peak}}} \!\right) \Pr \left(\! g_p \!\leq\! \dfrac{I}{P_{\mathrm{peak}}} \!\right)}_{\mathfrak X_1} \right. \notag \\
	& \hspace{1cm}\left. + \underbrace{ \Pr \left( \dfrac{g_n}{g_p} \geq \dfrac{\xi_n}{I}, \dfrac{g_f}{g_p} \geq \dfrac{\xi_f}{I}, g_p > \dfrac{I}{P_{\mathrm{peak}}} \right) }_{\mathfrak X_2}\right\}. \label{Sec_PowIntCSI_Out_Def2}
\end{align}
\end{small}
Solving for $\mathfrak X_1$ yields
\begin{small}
\begin{align}
	& \mathfrak X_1 = \left[ \prod_{u \in \{n, f\}} \int_{\tfrac{\xi_u}{P_{\mathrm{peak}}}}^{\infty} f_{g_u}(x) \mathrm dx\right]  \left[ 1 - \exp \left( \dfrac{-I}{\Omega_p P_{\mathrm{peak}}}\right)\right] \notag \\
	= & \ \left[ 1 - \exp \left( \dfrac{-I}{\Omega_p P_{\mathrm{peak}}}\right)\right] \prod_{u \in \{n, f\}} \!\! \dfrac{1}{\Gamma(N_u) \Omega_u^{N_u}}\int_{\tfrac{\xi_u}{P_{\mathrm{peak}}}}^{\infty} \!\! x^{N_u - 1} \notag \\
	& \hspace{5.5cm} \times \exp \left( \dfrac{-x}{\Omega_u}\right) \mathrm dx \notag \\
	= & \ \left[ 1 - \exp \left( \dfrac{-I}{\Omega_p P_{\mathrm{peak}}}\right)\right] \prod_{u \in \{n, f\}} \dfrac{\Gamma[N_u, \xi_u/(\Omega_u P_{\mathrm{peak}})]}{\Gamma(N_u)}. \label{mathfrakX1}
\end{align}
\end{small}
The integral above is solved using~\cite[eqn.~(3.381-3),~p.~346]{Grad}. Similarly, solving for $\mathfrak X_2$ yields
\begin{small}
\begin{align}
	& \mathfrak X_2 = \Pr \left( \dfrac{g_n}{g_p} \geq \dfrac{\xi_n}{I}, \dfrac{g_f}{g_p} \geq \dfrac{\xi_f}{I}, g_p > \dfrac{I}{P_{\mathrm{peak}}} \right) \notag \\
	= & \ \dfrac{1}{\Omega_p} \int_{\tfrac{I}{P_{\mathrm{peak}}}}^{\infty} \! \exp \left(\! \dfrac{-x}{\Omega_p} \!\right) \Pr \left(\! g_n \!\geq\! \dfrac{\xi_n}{I} x\!\right) \Pr \left(\! g_f \!\geq\! \dfrac{\xi_f}{I} x\!\right) \mathrm dx \notag \\
	= & \ \dfrac{1}{\Omega_p} \int_{\tfrac{I}{P_{\mathrm{peak}}}}^{\infty} \!\!\!\! \exp \left( \!\dfrac{-x}{\Omega_p}\! \right)\! \!\left[\! 1 \!-\! F_{g_n} \left(\! \dfrac{\xi_n}{I} x \!\right)\!\right] \!\!\left[\! 1 \!-\! F_{g_f} \!\left( \! \dfrac{\xi_f}{I} x \!\right)\!\right] \mathrm dx \notag \\
	= & \ \dfrac{1}{\Omega_p} \int_{\tfrac{I}{P_{\mathrm{peak}}}}^{\infty} \exp \left( \dfrac{-x}{\Omega_p} \right) \exp \left( \dfrac{-\xi_n x}{\Omega_n I } \right) \sum_{l = 0}^{N_n - 1} \dfrac{1}{l!} \left( \dfrac{\xi_n x}{\Omega_n I} \right)^{l} \notag \\
	& \hspace{2cm}\times \exp \left( \dfrac{-\xi_f x}{\Omega_f I } \right) \sum_{k = 0}^{N_f - 1} \dfrac{1}{k!} \left( \dfrac{\xi_f x}{\Omega_f I} \right)^{k} \mathrm dx \notag \\
	= & \ \dfrac{1}{\Omega_p} \sum_{k = 0}^{N_f - 1} \sum_{l = 0}^{N_n - 1} \dfrac{\xi_n^l \xi_f^k}{k! l! \Omega_n^l \Omega_f^k I^{k + l}} \int_{\tfrac{I}{P_{\mathrm{peak}}}}^{\infty} x^{k + l} \notag \\
	& \hspace{2.2cm}\times \exp \left[ - \left( \dfrac{1}{\Omega_p} + \dfrac{\xi_n}{\Omega_n I} + \dfrac{\xi_f}{\Omega_n I} \right) x\right] \mathrm dx \notag \\
	= & \ \dfrac{1}{\Omega_p} \sum_{k = 0}^{N_f - 1} \sum_{l = 0}^{N_n - 1} \!\! \Gamma \left[ k + l + 1, \left( \!\dfrac{1}{\Omega_p} \!+\! \dfrac{\xi_n}{\Omega_n I} \!+\! \dfrac{\xi_f}{\Omega_n I} \right) \dfrac{I}{P_{\mathrm{peak}}}\right] \notag \\ 
	& \hspace{0.3cm}\times \dfrac{\xi_n^l \xi_f^k}{k! l! \Omega_n^l \Omega_f^k I^{k + l}} \left( \dfrac{1}{\Omega_p} + \dfrac{\xi_n}{\Omega_n I} + \dfrac{\xi_f}{\Omega_n I} \right)^{-(k + l + 1)}. \label{mathfrakX2}
\end{align}
\end{small}
The integral above is solved using~\cite[eqn.~(3.381-3),~p.~346]{Grad}. Therefore, using~\eqref{Sec_PowIntCSI_Out_Def2}-\eqref{mathfrakX2}, an analytical expression for~\eqref{Sec_PowIntCSI_Out_Def} is given by~\eqref{Sec_PowIntCSI_OutClosed}; this concludes the proof.
\balance
\bibliographystyle{IEEEtran}
\bibliography{NOMA-SS}
\end{document}